\theoremstyle{plain}
\newtheorem{proposition}[theorem]{Proposition}
\newcommand{\hide}[1]{}
\newcommand{\sem}[1]{\llbracket #1\rrbracket}
\newcommand{\RR}{\mathbb R}
\newcommand{\II}{I}
\newcommand{\dd}{\mathrm d}
\newcommand{\dbeta}[2]{\beta_{#1,#2}}
\DeclareMathOperator*{\mulchsym}{\scalerel*{?}{\sum}}
\newcommand{\mulch}[2]{\left(\mulchsym\,\begin{matrix} {#1} \\ {#2} \end{matrix}\right)}
\newcommand{\mulchm}[1]{\left(\mulchsym\,\begin{matrix} #1 \end{matrix}\right)}
\newdimen\proofrulebreadth \proofrulebreadth=.05em
\newdimen\proofdotseparation \proofdotseparation=1.25ex
\newdimen\proofrulebaseline \proofrulebaseline=2ex
\let\then\relax
\def\hfi{\hskip0pt plus.0001fil}
\mathchardef\squigto="3A3B
\newif\ifinsideprooftree\insideprooftreefalse
\newif\ifonleftofproofrule\onleftofproofrulefalse
\newif\ifproofdots\proofdotsfalse
\newif\ifdoubleproof\doubleprooffalse
\let\wereinproofbit\relax
\newdimen\shortenproofleft
\newdimen\shortenproofright
\newdimen\proofbelowshift
\newbox\proofabove
\newbox\proofbelow
\newbox\proofrulename
\def\shiftproofbelow{\let\next\relax\afterassignment\setshiftproofbelow\dimen0 }
\def\shiftproofbelowneg{\def\next{\multiply\dimen0 by-1 }%
\afterassignment\setshiftproofbelow\dimen0 }
\def\setshiftproofbelow{\next\proofbelowshift=\dimen0 }
\def\setproofrulebreadth{\proofrulebreadth}
\def\prooftree{
%
\ifnum  \lastpenalty=1
\then   \unpenalty
\else   \onleftofproofrulefalse
\fi
%
\ifonleftofproofrule
\else   \ifinsideprooftree
        \then   \hskip.5em plus1fil
        \fi
\fi
%
\bgroup
\setbox\proofbelow=\hbox{}\setbox\proofrulename=\hbox{}%
\let\justifies\proofover\let\leadsto\proofoverdots\let\Justifies\proofoverdbl
\let\using\proofusing\let\[\prooftree
\ifinsideprooftree\let\]\endprooftree\fi
\proofdotsfalse\doubleprooffalse
\let\thickness\setproofrulebreadth
\let\shiftright\shiftproofbelow \let\shift\shiftproofbelow
\let\shiftleft\shiftproofbelowneg
\let\ifwasinsideprooftree\ifinsideprooftree
\insideprooftreetrue
%
\setbox\proofabove=\hbox\bgroup$\displaystyle 
\let\wereinproofbit\prooftree
%
\shortenproofleft=0pt \shortenproofright=0pt \proofbelowshift=0pt
%
\onleftofproofruletrue\penalty1
}
\def\eproofbit{
%
\ifx    \wereinproofbit\prooftree
\then   \ifcase \lastpenalty
        \then   \shortenproofright=0pt  
        \or     \unpenalty\hfil         
        \or     \unpenalty\unskip       
        \else   \shortenproofright=0pt  
        \fi
\fi
%
\global\dimen0=\shortenproofleft
\global\dimen1=\shortenproofright
\global\dimen2=\proofrulebreadth
\global\dimen3=\proofbelowshift
\global\dimen4=\proofdotseparation
\global\count255=\proofdotnumber
%
$\egroup  
%
\shortenproofleft=\dimen0
\shortenproofright=\dimen1
\proofrulebreadth=\dimen2
\proofbelowshift=\dimen3
\proofdotseparation=\dimen4
\proofdotnumber=\count255
}
\def\proofover{
\eproofbit 
\setbox\proofbelow=\hbox\bgroup 
\let\wereinproofbit\proofover
$\displaystyle
}%
\def\proofoverdbl{
\eproofbit 
\doubleprooftrue
\setbox\proofbelow=\hbox\bgroup 
\let\wereinproofbit\proofoverdbl
$\displaystyle
}%
\def\proofoverdots{
\eproofbit 
\proofdotstrue
\setbox\proofbelow=\hbox\bgroup 
\let\wereinproofbit\proofoverdots
$\displaystyle
}%
\def\proofusing{
\eproofbit 
\setbox\proofrulename=\hbox\bgroup 
\let\wereinproofbit\proofusing
\kern0.3em$
}
\def\endprooftree{
\eproofbit 
  \dimen5 =0pt
%
\dimen0=\wd\proofabove \advance\dimen0-\shortenproofleft
\advance\dimen0-\shortenproofright
%
\dimen1=.5\dimen0 \advance\dimen1-.5\wd\proofbelow
\dimen4=\dimen1
\advance\dimen1\proofbelowshift \advance\dimen4-\proofbelowshift
%
\ifdim  \dimen1<0pt
\then   \advance\shortenproofleft\dimen1
        \advance\dimen0-\dimen1
        \dimen1=0pt
        \ifdim  \shortenproofleft<0pt
        \then   \setbox\proofabove=\hbox{%
                        \kern-\shortenproofleft\unhbox\proofabove}%
                \shortenproofleft=0pt
        \fi
\fi
%
\ifdim  \dimen4<0pt
\then   \advance\shortenproofright\dimen4
        \advance\dimen0-\dimen4
        \dimen4=0pt
\fi
%
\ifdim  \shortenproofright<\wd\proofrulename
\then   \shortenproofright=\wd\proofrulename
\fi
%
\dimen2=\shortenproofleft \advance\dimen2 by\dimen1
\dimen3=\shortenproofright\advance\dimen3 by\dimen4
%
\ifproofdots
\then
        \dimen6=\shortenproofleft \advance\dimen6 .5\dimen0
        \setbox1=\vbox to\proofdotseparation{\vss\hbox{$\cdot$}\vss}%
        \setbox0=\hbox{%
                \advance\dimen6-.5\wd1
                \kern\dimen6
                $\vcenter to\proofdotnumber\proofdotseparation
                        {\leaders\box1\vfill}$%
                \unhbox\proofrulename}%
\else   \dimen6=\fontdimen22\the\textfont2 
        \dimen7=\dimen6
        \advance\dimen6by.5\proofrulebreadth
        \advance\dimen7by-.5\proofrulebreadth
        \setbox0=\hbox{%
                \kern\shortenproofleft
                \ifdoubleproof
                \then   \hbox to\dimen0{%
                        $\mathsurround0pt\mathord=\mkern-6mu%
                        \cleaders\hbox{$\mkern-2mu=\mkern-2mu$}\hfill
                        \mkern-6mu\mathord=$}%
                \else   \vrule height\dimen6 depth-\dimen7 width\dimen0
                \fi
                \unhbox\proofrulename}%
        \ht0=\dimen6 \dp0=-\dimen7
\fi
%
\let\doll\relax
\ifwasinsideprooftree
\then   \let\VBOX\vbox
\else   \ifmmode\else$\let\doll=$\fi
        \let\VBOX\vcenter
\fi
\VBOX   {\baselineskip\proofrulebaseline \lineskip.2ex
        \expandafter\lineskiplimit\ifproofdots0ex\else-0.6ex\fi
        \hbox   spread\dimen5   {\hfi\unhbox\proofabove\hfi}%
        \hbox{\box0}%
        \hbox   {\kern\dimen2 \box\proofbelow}}\doll%
%
\global\dimen2=\dimen2
\global\dimen3=\dimen3
\egroup 
\ifonleftofproofrule
\then   \shortenproofleft=\dimen2
\fi
\shortenproofright=\dimen3
%
\onleftofproofrulefalse
\ifinsideprooftree
\then   \hskip.5em plus 1fil \penalty2
\fi
}
\newcommand*{\mlstinline}[1]{\text{\lstinline|#1|}}
\title{The Beta-Bernoulli process and algebraic effects}
\author{Sam Staton\ $^1$,\; Dario Stein\ $^1$,\; Hongseok Yang\ $^2$,\; Nathanael L.\ Ackerman\ $^3$,\; Cameron E.\ Freer\ $^4$,\; and Daniel M.\ Roy\ $^5$}{1 $\ $ Univ.~Oxford, $\ \,$ 2$\ $ KAIST,$\ \,$ 3$\ $ Harvard Univ.,$\ \,$ 4$\ $ Borelian, $\ \,$ 5$\ $ Univ.~Toronto}{}{}{}
\authorrunning{S Staton, D Stein, H Yang, NL Ackerman, CE Freer, DM Roy} 
\subjclass{Theory of computation → Probabilistic computation}
\keywords{Beta-Bernoulli process, Algebraic effects, Probabilistic programming, Exchangeability}
\newcommand*{\defeq}{\stackrel{\text{def}}=}
\begin{document}

\maketitle

\begin{abstract}
In this paper we use the framework of algebraic effects from programming language theory to analyze the Beta-Bernoulli process, a standard building block in Bayesian models. 
Our analysis reveals the importance of abstract data types, and two types of program equations, called commutativity and discardability. 
We develop an equational theory of terms that use the Beta-Bernoulli process, 
and show that the theory is complete with respect to the measure-theoretic semantics, and also in the syntactic sense of Post. 
Our analysis has a potential for being generalized to other stochastic processes relevant to Bayesian modelling, yielding new understanding of these processes from the perspective of programming.
 \end{abstract}

\section{Introduction}
\label{sec:intro}

From the perspective of programming, a family of Boolean random processes is implemented by a module that supports the following interface:
\begin{lstlisting}
module type ProcessFactory = sig type process
                                  val new : $H$ -> process
                                  val get : process -> bool  end
\end{lstlisting}
where $H$ is some type of hyperparameters. 
Thus one can initialize a new process, 
and then get a sequence of Booleans from that process. 
The type of processes is kept abstract so that any internal state or representation is hidden. 

One can analyze a module extensionally in terms of the properties of its interactions with a client program. 
In this paper, we perform this analysis for the Beta-Bernoulli process, an important building block in Bayesian models. 
We completely axiomatize its equational properties, using the formal framework of algebraic effects~\cite{pp-algop-geneff}. 

The following modules are our leading examples. (Here \lstinline|flip(r)| tosses a coin with bias \lstinline|r|.)
\begin{figure}[!htb]
\begin{minipage}{.6\textwidth}
\begin{lstlisting}
module Polya = (struct
  type process = (int * int) ref 
  let new(i,j) = ref (i,j)
  let get p = let (i,j) = !p in
    if flip(i/(i+j)) then p := (i+1,j); true
    else p := (i,j+1); false  end : ProcessFactory)
\end{lstlisting}
\end{minipage}
\,
\begin{minipage}{.4\textwidth}
\begin{lstlisting}
module BetaBern = (struct
  type process = real
  let new(i,j) = sample_beta(i,j)
  let get(r) = flip(r)

end : ProcessFactory)
\end{lstlisting}
\end{minipage}
\end{figure}

The left-hand module, \lstinline|Polya|, is an implementation of P\'olya's urn. 
An urn in this sense is a hidden state which contains $i$-many balls marked \lstinline|true| and $j$-many balls marked \lstinline|false|. 
To sample, we draw a ball from the urn at random; before we tell what we drew, we put back the ball we drew as well as an identical copy of it. The contents of the urn changes over time.

\begin{wrapfigure}[7]{r}{3.5cm}\vspace{-4mm}
\mbox{\hspace{-5mm}%
\input{beta-bern}}%
\end{wrapfigure} 
 The right-hand module, \lstinline|BetaBern|, is based on the beta distribution. This is the probability measure on the unit interval $[0, 1]$ that measures the bias of a random source (such as a potentially unfair coin) from which \lstinline|true| has been observed $(i-1)$ times and \lstinline|false| has been observed $(j-1)$ times, as illustrated on the right. For instance $\mathsf{beta(2,2)}$ describes the situation where we only know that neither \lstinline|true| nor \lstinline|false| are impossible; while in $\mathsf{beta(3,2)}$ we are still ignorant but we believe that \lstinline|true| is more likely. 

It turns out that these two modules have the same observable behaviour. This essentially follows from de Finetti's theorem~(e.g.~\cite{schervish}), but rephrased in programming terms.
The equivalence makes essential use of type abstraction: if we could look into the urn, or ask precise questions about the real number, the modules would be distinguishable. 

The module \lstinline|Polya| has a straightforward operational semantics (although we don't formalize that here). By contrast, \lstinline|BetaBern| has a straightforward
denotational semantics~\cite{Kozen81}. In Section~\ref{sec:presentation-beta-bernoulli}, we provide an axiomatization of equality, which is sound by both accounts. We show completeness of our 
axiomatization with respect to the denotational semantics of \lstinline|BetaBern| (\S\ref{sec:completeness},~Thm.~\ref{thm:model-completeness}). We use this to show that the axiomatization is in fact 
syntactically complete (\S\ref{sec:extensionality-syntactic-completeness},~Cor.~\ref{cor:hpcomplete-bb}), which means it is complete with respect to \emph{any} semantics. 

\medskip 
For the remainder of this section, we give a general introduction to our axioms. 
\medskip

\noindent{\bf Commutativity and discardability.}\
Commutativity and discardability are important program equations~\cite{fuhrmann} that are closely related, we argue, to exchangeability in statistics.
\begin{itemize}
\item \emph{Commutativity} is the requirement that when \lstinline|x| is not free in \lstinline|u| and \lstinline|y| is not free in \lstinline|t|,
\[
\Big(\mlstinline{let x = t in let y = u in v}\Big) \quad=\quad
\Big(\mlstinline{let y = u in let x = t in v}\Big).
\]
\item \emph{Discardability} is the requirement that when \lstinline|x| is not free in \lstinline|u|,
$\Big(\mlstinline{let x = t in u}\Big) =
\Big(\mlstinline{u}\Big)$.
\end{itemize}

Together, these properties say that data flow, rather than the control flow, is what matters. For example, in a standard programming language, the purely functional total expressions are commutative and discardable. By contrast, expressions that write to memory are typically not commutative or discardable
(a simple example is \lstinline|t|$ = $\lstinline|u|$ = $\lstinline|a++|, \lstinline|v|$ = $\lstinline|(x,y)|).  A simple example of a commutative and discardable operation is a coin toss: we can reorder the outcomes of tossing a single coin, 
and we can drop some of the results (unconditionally) without changing the overall statistics.

We contend that commutativity and discardability of program expressions is very close to the basic notion of exchangeability of infinite sequences, which is central to Bayesian statistics. 
Informally, an infinite random process, such as an infinite random sequence, is said to be exchangeable if one can reorder and discard draws without changing the overall statistics. 
(For more details on exchangeable random processes in probabilistic programming languages, see \cite{XRP-PPS2016,XRPDA-PPS2017}, and the references therein.)
A client program for the \lstinline|BetaBern| module is clearly exchangeable in this sense: this is roughly Fubini's theorem.
For the \lstinline|Polya| module, an elementary calculation is needed: it is not trivial because memory is involved.

\medskip
\noindent{\bf Conjugacy.}\
Besides exchangeability, the following conjugacy equation is crucial:
\begin{align*}
& \Big(\mlstinline{let p=M.new(i,j) in (M.get(p), p)}\Big) 
\\
& \quad {} = \Big(\mlstinline{if flip(i/(i+j)) then (true, M.new(i+1,j)) else (false, M.new(i,j+1))}\Big).
\end{align*}
This is essentially the operational semantics of the \lstinline|Polya| module, and 
from the perspective of \lstinline|BetaBern| it is the well-known conjugate-prior relationship between the Beta and Bernoulli distributions. 

\medskip
\noindent{\bf Finite probability.}\
In addition to exchangeability and conjugacy, we include the standard equations of finite, discrete, rational probability theory. 
To introduce these, suppose that we have a module 
\begin{lstlisting}
Bernoulli : sig val get : int * int -> bool end
\end{lstlisting}
which is built so that \lstinline|Bernoulli.get($i$,$j$)| samples 
\emph{with single replacement} from an urn with $i$-many balls marked \lstinline|true| and $j$-many balls marked \lstinline|false|.
(In contrast to P\'olya's urn, the urn in this simple scheme does not change over time.)
So \lstinline|Bernoulli.get($i$,$j$)|$\ =\ $\lstinline|flip($\frac i {i+j}$)|.
This satisfies certain laws, first noticed long ago by Stone~\cite{stone}, and 
recalled in \S\ref{sec:algebraic-Bernoulli}. 

In summary, our main contribution is that these axioms --- exchangeability, conjugacy, and finite probability --- 
entirely determine the equational theory of the Beta-Bernoulli process, in the following sense:
\begin{itemize}
\item \emph{Model completeness:} Every equation that holds in the measure theoretic interpretation is derivable from our axioms (Thm.~\ref{thm:model-completeness});
\item \emph{Syntactical completeness:} Every equation that is not derivable from our axioms is inconsistent with finite discrete probability (Cor.~\ref{cor:hpcomplete-bb}). 
\end{itemize}
We argue that these results open up a new method for analyzing Bayesian models, based on algebraic effects (see \S\ref{sec:conc} and \cite{XRPDA-PPS2017}\footnote{This paper formalizes and proves a conjecture from~\cite{XRPDA-PPS2017}, which is an unpublished abstract.}). 


\section{An algebraic presentation of the Beta-Bernoulli process}
\label{sec:presentation-beta-bernoulli}

In this section, we present syntactic rules for
well-formed client programs of the Beta-Bernoulli module, and
axioms for deriving equations on those programs.

\subsection{An algebraic presentation of finite probability}
\label{sec:algebraic-Bernoulli}
\newcommand{\rch}[2]{\mathop{{}_{#1}\!{?}\!_{#2}}}
\newcommand{\pch}[1]{\mathop{?_{#1}}}
\newcommand{\tj}[2]{#1\vdash #2}
\newcommand{\pj}[3]{#1\mathop|#2\vdash #3}
\newcommand{\NN}{\mathbb{N}}
Recall the module \lstinline|Bernoulli| from the introduction
which provides a method of sampling with odds $(i:j)$. 
We will axiomatize its equational properties. Algebraic effects provide a way to 
axiomatize the specific features of this module while putting aside the general properties of programming languages, such as $\beta/\eta$ laws. 
In this situation the basic idea is that each module induces a binary operation 
$\rch ij$ on programs by 
\[t\rch ij u\ \defeq \ 
\mlstinline{if Bernoulli.get($i$,$j$) then $\ t\ $ else $\ u$}\text.\]
Conversely, given a family of binary operations $\rch ij$, we can recover $\mlstinline{Bernoulli.get($i$,$j$)} = \mlstinline{true}\rch i j \mlstinline{false}$.
So to give an equational presentation of the \lstinline|Bernoulli| module we 
give a equational presentation of the binary operations $\rch ij$. 
A full programming language will have other constructs and $\beta\eta$-laws but it is routine to combine these with an algebraic theory of effects~(e.g.~\cite{ahmanstaton,jsv,kammarplotkin,pretnar}).
\begin{definition}
\label{def:rational-convex}
The \emph{theory of rational convexity} is the first-order algebraic theory with binary operations $\rch ij$ for all $i,j\in \NN$ such that $i+j>0$, subject to the axiom schemes
\begin{align*}
\tj{w,x,y,z}{&(w \rch i j x) \rch {i+j}{k+l}(y \rch k l z)
=
(w \rch i k y) \rch {i+k}{j+l}(x \rch j l z) }
\\
\tj{x,y}{&x \rch i j y = y \rch j i x}
\qquad \qquad
\tj{x,y}{x \rch {i} {0} y = x}
\qquad\qquad\tj{x}{x \rch {i} {j} x = x}
\end{align*}
\end{definition}

Commutativity 
${(w\rch ij x)\rch kl (y \rch ij z) =(w\rch k l y)\rch i j (x \rch kl z)}$ 
of operations $\rch kl$ and $\rch ij$ is a derivable equation, 
and so is scaling $x\rch {ki} {kj} y =x\rch {i} {j} y$ for $k>0$. Commutativity and discardability ($x\rch ij x=x$) in this algebraic sense (cf. \cite{linton-comm,modes}) precisely correspond to the program equations in Section~\ref{sec:intro} (see also~\cite{kammarplotkin}). The theory first appeared in~\cite{stone}.

\subsection{A parameterized algebraic signature for Beta-Bernoulli}
\label{sec:pat-beta-bern}
In the theory of convex sets, the parameters $i,j$ for \lstinline|get| 
range over the integers. These integers are not a first class concept in our equational 
presentation: we did not axiomatize integer arithmetic. 
However, in the Beta-Bernoulli process, or any module \lstinline|M| for the \lstinline|ProcessFactory| interface,
it is helpful to understand
the parameters to \lstinline|get| as abstract, and \lstinline|new|
as generating such parameters. To interpret this, we treat these parameters to \lstinline|get| as first class. There are still hyperparameters to \lstinline|new|, which we do not treat as first class here. (In a more complex hierarchical system with hyperpriors, we might treat them as first class.)

As before, to avoid studying an entire programming language, we look at the constructions
\begin{align*}
&\nu_{i,j}p.t\ \defeq \ \mlstinline{let $\ p$=M.new($i$,$j$) in  $\ t$}
&t\pch p u\ \defeq\ \mlstinline{if M.get($p$) then  $\ t \ $  else  $\ u$}
\end{align*}
There is nothing lost by doing this, because we can recover $\mlstinline{M.new($i$,$j$)} = \nu_{i,j}p.\,p$ and $\mlstinline{M.get($p$)} = \mlstinline{true}\pch p \mlstinline{false}$. 
In the terminology of~\cite{pp-algop-geneff}, these would be called the `generic effects' of the algebraic operations $\nu_{i,j}$ and $\pch p$. Note that $\pch p$ is a parameterized binary operation. Formally, our syntax now has two kinds of variables: $x$, $y$ as before, ranging over continuations, and now also $p$, $q$ ranging over parameters. We notate this by having contexts with two zones, and write $x:n$ if $x$ expects $n$ parameters.
\begin{definition}
\label{def:pat-beta-bern}
The term formation rules for the theory of Beta-Bernoulli are:
\begin{align*}
&\begin{prooftree}-\justifies
\pj{\Gamma}{\Delta,x:m,\Delta'}{x(p_1\dots p_m)}
\using{\mbox{\small{$(p_1\dots p_m\in \Gamma)$}}}
\end{prooftree}
&
\begin{prooftree}
\pj{\Gamma,p}{\Delta} t
\justifies
\pj{\Gamma}{\Delta} {\nu_{i,j} p.t}
\using {\mbox{\small{$(i,j>0)$}}}
\end{prooftree}
\\[10pt]
&\begin{prooftree}
\pj{\Gamma}\Delta t
\quad
\pj\Gamma\Delta u
\justifies
\pj \Gamma\Delta {t\pch{p}u}
\using {\mbox{\small $(p\in\Gamma)$}}
\end{prooftree}
&
\begin{prooftree}
\pj{\Gamma}\Delta t
\quad
\pj\Gamma\Delta u
\justifies
\pj \Gamma\Delta {t\rch ij u}
\using {\mbox{\small{$(i+j>0)$}}}
\end{prooftree}
\end{align*}
where $\Gamma$ is a parameter context of the form $\Gamma=(p_1,\ldots, p_\ell)$ 
and $\Delta$ is a context of the form $\Delta=(x_1\colon m_1,\ldots, x_k\colon m_k)$. Where ${x\colon 0}$, we often write $x$ for $x()$. For the sake of a well-defined notion of dimension in \ref{sec:completenessproof}, we disallow the formation of $\nu_{i,0}$ and $\nu_{0,i}$. 
\end{definition}

We work up-to $\alpha$-conversion and substitution of terms for variables must avoid unintended capture of free parameters. For example, substituting $x \pch p y$ for $w$ in $\nu_{1,1}p.w$ yields $\nu_{1,1}q.(x\pch p y)$,
while substituting $x \pch p y$ for $z(p)$ in $\nu_{1,1}p.z(p)$ yields $\nu_{1,1}p.(x\pch p y)$.

\subsection{Axioms for Beta-Bernoulli}
\label{sec:axioms-beta-bernoulli}
The axioms for the Beta-Bernoulli theory comprise the axioms for rational convexity
(Def.~\ref{def:rational-convex}) together with the following axiom schemes.

\begin{description}
\item[Commutativity.] All the operations commute with each other:
\begin{align}
\pj{p,q}{w,x,y,z:0}{\;&(w\pch q x)\pch p(y\pch q z)=(w\pch p y)\pch q(x\pch p z)}
	\label{eqn:commpchpch} \tag{C1}
\\
\nonumber
\pj{-}{x\colon 2}{\;&\nu_{i,j}p.(\nu_{k,l}q.x(p,q))=\nu_{k,l}q.(\nu_{i,j}p.x(p,q))}
	\label{eqn:commnunu} \tag{C2}
\\
\nonumber
\pj{q}{x,y\colon 1}{\;&\nu_{i,j}p.(x(p)\pch q y(p))=(\nu_{i,j}p.x(p))\pch q (\nu_{i,j}p.y(p))}
	\label{eqn:commnupch} \tag{C3}
\\
\nonumber
\pj{-}{x,y:1}{\;&\nu_{i,j}p.(x(p)\rch kl y(p))=(\nu_{i,j}p.x(p))\rch kl (\nu_{i,j}p.y(p))}
	\label{eqn:commnurch} \tag{C4}
\\
\nonumber
\pj{p}{w,x,y,z:0}{\;&(w\rch ij x)\pch p(y\rch ij z)=(w\pch p y)\rch ij(x\pch p z)}
	\label{eqn:commpchrch} \tag{C5}
\end{align}
\item[Discardability.] All operations are idempotent:
\begin{align*}
	\pj{-}{x\colon 0}{(\nu_{i,j}p.x)=x}&&\quad
\pj{p}{x\colon 0}{x\pch p x=x}
\tag{D1--2}\end{align*}
\item[Conjugacy.]
\begin{align}
\pj{-}{x,y:1}{&\nu_{i,j}p.(x(p)\pch p y(p))=(\nu_{i+1,j}p.x(p))\rch ij (\nu_{i,j+1}p.y(p)})
	\label{eqn:conjugacy}\tag{Conj}
\end{align}
\end{description}
A theory of equality for terms in context is built, as usual, by closing the axioms 
under substitution, congruence, reflexivity, symmetry and transitivity. It immediately follows from conjugacy and discardability that $x \rch ij y $ is definable as $\nu_{i,j}p.(x\pch p y)$ for $i,j>0$. 

As an example, consider 
$t(r) = (r \pch p x) \pch p (y \pch p r)$ that represents tossing a coin with bias $p$ twice, continuing with $x$ or $y$ if the results are different, or with $r$ otherwise. One can show that $x \rch 1 1 y$ is a unique fixed point of $t$, i.e. $x \rch 1 1 y = t(x \rch 1 1 y)$; see \S\ref{app:derivation} 
for detail.
This is exactly von Neumann's trick \cite{vonneumann} to simulate a fair coin toss with a biased one. 

(For more details on the general axiomatic framework with parameters, see~\cite{s-pred-logic,s-instances}, where it is applied to predicate logic, $\pi$-calculus, and other effects.)


\section{A complete interpretation in measure theory}
\label{sec:completeness}
In this section we give an interpretation of terms using measures and integration operators, the standard formalism for probability theory (e.g.~\cite{pollard,schervish}), and we show that this interpretation is complete (Thm.~\ref{thm:model-completeness}). 
Even if the reader is not interested in measure theory, they may still find value in the syntactical results of \S\ref{sec:extensionality-syntactic-completeness} which we prove using this completeness result.

\medskip
By the Riesz–Markov–Kakutani representation theorem, there are two equivalent ways to view probabilistic programs: as probability kernels and as linear functionals. Both are useful.

\paragraph*{Programs as probability kernels.}
Forgetting about abstract types for a moment, terms in the \lstinline|BetaBern| module are first-order probabilistic programs. So we have a standard denotational semantics due to \cite{Kozen81} where terms are interpreted as probability kernels and $\nu$~as integration. Let $\II = [0,1]$ denote the unit interval. We write $\beta_{i,j}$ for the $\textrm{Beta}(i,j)$-distribution on $\II$, which is given by the density function $p\mapsto \frac{1}{B(i,j)}p^{i-1}(1-p)^{j-1}$, where $B(i,j)=\frac{(i-1)!(j-1)!}{(i+j-1)!}$ is a normalizing constant.

For contexts of the form $\Gamma = (p_1,\ldots, p_\ell)$ and $\Delta = (x_1 : m_1, \ldots, x_k : m_k)$, we let
$\sem{\Delta} \defeq \sum_{i=1}^k \II^{m_i}$ consist of a copy of $\II^{m_i}$ for every variable $x_i : m_i$.
This has a $\sigma$-algebra $\Sigma(\sem\Delta)$ generated by the Borel sets. We interpret terms $\pj \Gamma \Delta t$ as probability kernels $\sem{t} : \II^\ell\times\Sigma( \sem{\Delta})\to [0,1]$ inductively,
for $\vec p\in\II^\ell$ and $U\in \Sigma( \sem{\Delta})$ :
\begin{align*}
& \sem{x_i(p_{j_1},\ldots,p_{j_m})}(\vec p, U) = 1\text{ if $(i,p_{j_1}\ldots p_{j_m})\in U$, $0$ otherwise}
\\
& \sem{u \rch ij v}(\vec p,U) = \tfrac 1 {i+j}\Big(i(\sem{u}(\vec p,U)) + j(\sem{v}(\vec p,U))\Big) 
\\
& \sem{u \pch{p_j} v}(\vec p,U) = p_j(\sem{u}(\vec p,U)) + (1-p_j)(\sem{v}(\vec p,U))
\\
& \sem{\nu_{i,j}q.t}(\vec p,U) = \int_0^1 \sem{t}((\vec p,q),U) \, \beta_{i,j}(\mathrm dq)
\ \ \ \ \Big[=\int_0^1 \sem{t}((\vec p,q),U) \ \tfrac{1}{B(i,j)}q^{i-1}(1-q)^{j-1}\  \mathrm dq\Big]
\end{align*}
\begin{proposition}\label{prop:model-soundness1}
The interpretation is sound: if $\pj{\Gamma}{\Delta}t=u$ 
is derivable then $\sem t=\sem u$ as probability kernels $\sem \Gamma\times \Sigma(\sem\Delta)\to [0,1]$. 
\end{proposition}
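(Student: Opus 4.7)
The plan is to proceed by induction on the derivation of $\pj{\Gamma}{\Delta}{t = u}$. The cases for reflexivity, symmetry, and transitivity are immediate, and congruence holds because each clause in the definition of $\sem{\cdot}$ depends on its subterms only through their semantic interpretations. Closure under substitution of terms for variables rests on a standard substitution lemma: if $\pj{\Gamma}{\Delta, x\colon m, \Delta'}{t}$ and $\pj{\Gamma,p_1,\ldots,p_m}{\Delta,\Delta'}{u}$, then $\sem{t[u/x]}$ is the Kleisli-style composite obtained by feeding each mass that $\sem{t}$ places on the $x$-component of $\sem{\Delta,x\colon m,\Delta'}$ into $\sem{u}$; this is proved by a routine induction on the structure of $t$, taking care that the parameter contexts align under the capture-avoiding convention in Def.~\ref{def:pat-beta-bern}.

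The remaining work is to verify each axiom in the measure-theoretic model. The rational convexity axioms reduce to elementary identities among convex combinations, since $\sem{u \rch{i}{j} v}$ is the convex combination with weights $i/(i+j)$ and $j/(i+j)$. The axioms (C1) and (C5) are similar rearrangements of weighted sums. The axiom (C2) is an instance of Fubini's theorem for the finite product measure $\dbeta{i}{j} \otimes \dbeta{k}{l}$ on $\II^2$. The axioms (C3) and (C4) are instances of linearity of the Lebesgue integral, since the outer $\pch{q}$ or $\rch{k}{l}$ contributes a fixed convex combination that pulls through the integral against $\dbeta{i}{j}$. Discardability (D1) follows because $\dbeta{i}{j}$ is a probability measure, so integrates the constant kernel $\sem{x}$ to itself, and (D2) follows from $q + (1-q) = 1$.

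The main obstacle is the conjugacy axiom. Expanding both sides of (Conj) and splitting the left-hand integral by linearity reduces the claim to the pointwise density identities
\[
q\,\dbeta{i}{j}(\dd q) \;=\; \tfrac{i}{i+j}\,\dbeta{i+1}{j}(\dd q), \qquad (1-q)\,\dbeta{i}{j}(\dd q) \;=\; \tfrac{j}{i+j}\,\dbeta{i}{j+1}(\dd q),
\]
which in turn reduce to the Beta-function recurrences $B(i+1,j) = \tfrac{i}{i+j}B(i,j)$ and $B(i,j+1) = \tfrac{j}{i+j}B(i,j)$. Both are immediate consequences of the closed form $B(i,j) = (i-1)!(j-1)!/(i+j-1)!$. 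Given these identities, the two summands on the left match the two summands on the right term by term. This is precisely the classical Beta-Bernoulli conjugacy, here packaged as an equality of semantic kernels.
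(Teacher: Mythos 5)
Your proposal is correct and follows essentially the same route as the paper's proof notes: induction on the derivation, with each axiom verified as an elementary fact about probability, (C2) via Fubini and (Conj) via the Beta-function recurrence $B(i+1,j)=\tfrac{i}{i+j}B(i,j)$ expressing the conjugate-prior relationship. You additionally spell out the substitution lemma and the reduction of (Conj) to the density identities, which the paper leaves implicit, but the underlying argument is the same.
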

\begin{proof}[Proof notes]
One must check that the axioms are sound under the interpretation. 
Each of the axioms are elementary facts about probability. 
For instance, commutativity \eqref{eqn:commnunu} amounts to Fubini's theorem, and
the conjugacy axiom \eqref{eqn:conjugacy} is the well-known conjugate-prior relationship of Beta- and Bernoulli distributions. 
\end{proof}

\paragraph*{Interpretation as functionals}
We write $\RR^{\II^m}$ for the vector space of continuous 
functions $\II^m \to \RR$, endowed with the supremum norm. 
Given a probability kernel $\kappa:\II^\ell \times \Sigma\big(\sum_{j=1}^k\II^{m_j}\big)\to [0,1]$ 
and $\vec p\in\II^\ell$, we define a linear map
$\phi_{\vec p}: \RR^{\II^{m_1}}\times \dots\times \RR^{\II^{m_k}}\to \RR$, by considering $\kappa$ as an integration operator:
\[\textstyle\phi_{\vec p}(f_1\ldots f_k)= \int f_j(r_1\ldots r_{m_j})\ \kappa(\vec p, \dd(j,r_1\ldots r_{m_j}))\]
Here $\phi_{\vec p}$ are unital ($\phi(\vec 1)=1$) and positive ($\vec f\geq 0\implies \phi(\vec f)\geq 0$).

When $\kappa=\sem t$, this $\phi_{\vec p}(\vec f)$ is moreover continuous 
in $\vec p$, and hence 
a unital positive linear map $\phi: \RR^{\II^{m_1}}\times \dots\times \RR^{\II^{m_k}}\to \RR^{\II^\ell}$ 
\cite[Thm.~5.1]{FurberJacobs}.
It is informative to spell out the interpretation of terms ${\pj {p_1,\ldots,p_\ell} {x_1 : m_1, \ldots, x_k : m_k} {t}}$ as maps
$ \sem{t} : \RR^{\II^{m_1}} \times \ldots \times \RR^{\II^{m_k}} \to \RR^{\II^\ell}$
since it fits the algebraic notation: we may think of the variables $x\colon m$ as ranging over functions~$\RR^{\II^m}$.
\begin{proposition}\label{prop:model-soundness2}
The functional interpretation is inductively given by 
\begin{align*}
&\sem{x_i(p_{j_1},\ldots,p_{j_m})}(\vec f)(\vec p) = f_i(p_{j_1},\ldots,p_{j_m}) \\
&\sem{u \rch ij v}(\vec f)(\vec p) = \tfrac 1 {i+j}\Big(i(\sem{u}(\vec f)(\vec p)) + j(\sem{v}(\vec f)(\vec p))\Big) \\
&\sem{u \pch{p_j} v}(\vec f)(\vec p) = p_j(\sem{u}(\vec f)(\vec p)) + (1-p_j)(\sem{v}(\vec f)(\vec p)) \\
&\sem{\nu_{i,j}q.t}(\vec f)(\vec p) = \int_0^1 \sem{t}(\vec f)(\vec p,q) \, \beta_{i,j}(\mathrm dq)
\end{align*}
\end{proposition}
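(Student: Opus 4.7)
The plan is to proceed by structural induction on the term $t$, translating each clause in the kernel semantics of Proposition~\ref{prop:model-soundness1} through the integration-against-continuous-functions formula
\[\textstyle \sem{t}(\vec f)(\vec p) \;=\; \int f_j(r_1\ldots r_{m_j})\ \sem{t}(\vec p,\dd(j,r_1\ldots r_{m_j}))\]
and checking that the resulting expressions coincide with those stated. Before starting, I would note that this formula indeed defines a positive unital linear map, and observe that continuity in $\vec p$ holds because $\sem{t}$ is constructed from kernels that depend continuously on their parameters (the densities $\beta_{i,j}$ are continuous in a way that is preserved under the integral operations used), so that the codomain is genuinely $\RR^{\II^\ell}$.

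For the base case $t = x_i(p_{j_1},\ldots,p_{j_m})$, the kernel $\sem{t}(\vec p,\cdot)$ is the Dirac mass at $(i,p_{j_1},\ldots,p_{j_m})$, so integrating $\vec f$ against it trivially returns $f_i(p_{j_1},\ldots,p_{j_m})$. For the two binary choice constructors $\rch ij$ and $\pch{p_j}$, the kernel semantics writes $\sem{t}$ as a convex combination of $\sem{u}$ and $\sem{v}$ with constant or parameter-dependent weights; since integration is linear in the measure, the corresponding functionals are the same convex combinations of $\sem{u}(\vec f)(\vec p)$ and $\sem{v}(\vec f)(\vec p)$, matching the stated formulas immediately.

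The interesting case is $t = \nu_{i,j}q.t'$. Here the kernel semantics prescribes $\sem{t}(\vec p,U) = \int_0^1 \sem{t'}((\vec p,q),U)\,\beta_{i,j}(\mathrm dq)$, so applying the integration-against-$\vec f$ formula yields a double integral with respect to $\beta_{i,j}(\mathrm dq)$ on the outside and $\sem{t'}((\vec p,q),\cdot)$ on the inside. One must swap the two integrals to get $\int_0^1 \sem{t'}(\vec f)(\vec p,q)\,\beta_{i,j}(\mathrm dq)$. This interchange is the main technical step; it is justified by Fubini-Tonelli since $\vec f$ is bounded continuous, $\sem{t'}$ is a probability kernel, and $\beta_{i,j}$ is a finite measure, so the integrand is jointly measurable and integrable against the product measure. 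The bound on $\vec f$ together with the fact that $\sem{t'}$ is subprobability makes the hypotheses of Fubini's theorem trivial to verify.

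The main conceptual obstacle, modest as it is, is bookkeeping: keeping track of the parameter context $\vec p$ as it grows by $q$ when descending under $\nu_{i,j}q$, and ensuring that the substitution of the new coordinate into $\sem{t'}(\vec f)$ respects the continuity needed to stay within $\RR^{\II^{\ell+1}}$ before integrating out $q$. Once this bookkeeping is in place, each inductive step is a routine rewriting, and the proposition follows.
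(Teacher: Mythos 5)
Your proof is correct and is exactly the routine verification the paper leaves implicit: the paper states Proposition~\ref{prop:model-soundness2} without proof, and the intended argument is precisely your structural induction, pushing the kernel clauses of Proposition~\ref{prop:model-soundness1} through the integration formula, with linearity of integration handling the two choice constructors and a Fubini-type interchange (equivalently, the standard extension from indicators to bounded measurable functions) handling the $\nu$ case. Your remarks on continuity in $\vec p$ match what the paper delegates to the cited result of Furber and Jacobs, so there is nothing to add.
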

For example, $\sem{\pj-{x,y\colon 0}{x\rch 11 y}}:\RR\times \RR\to \RR$ is the function 
$(x,y)\mapsto \frac 12(x+y)$, and 
$\sem{\pj-{x\colon 1}{\nu_{1,1}p.x(p)}}:\RR^\II\to\RR$ is the integration functional,
$f\mapsto \int_0^1 f(p)\,\dd p$.  

(We use the same brackets $\sem-$ for both the measure-theoretic and the functional interpretations; the intended 
semantics will be clear from context.)

\subsection{Technical background on Bernstein polynomials}
\begin{definition}[Bernstein polynomials]
For $i=0,\ldots,k$, we define the $i$-th basis Bernstein polynomial $b_{i,k}$ of degree $k$ as
$b_{i,k}(p) = \binom k i p^{k-i}(1-p)^i$.
For a multi-index $I = (i_1,\ldots,i_\ell)$ with $0 \leq i_j \leq k$, we let $b_{I,k}(\vec p) = b_{i_1,k}(p_1)\cdots b_{i_\ell,k}(p_\ell)$. A Bernstein polynomial is a linear combination of Bernstein basis polynomials.
\end{definition}
The family $\{ b_{i,k} : i = 0,\ldots,k \}$ is indeed a basis of the polynomials of maximum degree $k$ and also a partition of unity, i.e. $\sum_{i=0}^k b_{i,k} = 1$. Every Bernstein basis polynomial of degree $k$ can be expressed as a nonnegative rational linear combination of degree $k+1$ basis polynomials.

The density function of the distribution $\beta_{i,j}$ on $[0,1]$ for $i,j>0$ is proportional to a Bernstein basis polynomial of degree $i+j-2$. We can conclude that the measures $\{ \beta_{i,j} : i,j>0, i + j = n \}$ are linearly independent for every $n$. In higher dimensions, the polynomials $\{b_{I,k}\}$ are linearly independent for every $k$.
Moreover, products of beta distributions $\beta_{i_r,j_r}$ are linearly independent as long as $i_r + j_r = n$ holds for some common $n$. This will be a key idea for normalizing Beta-Bernoulli terms.

\subsection{Normal forms and completeness}
For the completeness proof of the measure-theoretic model, we proceed as follows: To decide $\pj{\Gamma}{\Delta}{ t = u}$ for two terms $t,u$, we transform them into a common normal form whose interpretations can be given explicitly. We then use a series of linear independence results to show that if the interpretations agree, the normal forms are already syntactically equal. \\
Normalization happens in three stages. 
\begin{itemize}
\item If we think of a term as a syntax tree of binary choices and $\nu$-binders, we use the conjugacy axiom to push all occurrences of $\nu$ towards the leaves of the tree.
\item We use commutativity and discardability to stratify the use of free parameters~$\pch p$.
\item The leaves of the tree will now consist of chains of $\nu$-binders, variables and ratio choices $\rch i j$. Those can be collected into a canonical form.
\end{itemize}
We will describe these normalization stages in reverse order because of their increasing complexity.

\subsubsection{Stone's normal forms for rational convex sets}
Normal forms for the theory of rational convex sets have been described by Stone~\cite{stone}. 
We note that if $\pj- {x_1\ldots x_k : 0}t$ is a term in the theory of rational convex sets (Def.~\ref{def:rational-convex}) then
$\sem t:\RR^k\to \RR$ is a unital positive linear map that takes rationals to rationals. 
From the perspective of measures, this corresponds to a categorical distribution with $k$ categories.
\begin{proposition}[Stone]
The interpretation exhibits a bijective correspondence between 
terms $\pj- {x_1\ldots x_k : 0}t$ built from $\rch ij$, modulo equations, and unital positive 
linear maps $\RR^k\to \RR$ that take rationals to rationals. 
\end{proposition}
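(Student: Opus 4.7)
The plan is to prove the bijection by checking soundness, surjectivity, and injectivity. A unital positive linear map $\phi:\RR^k\to\RR$ sending rationals to rationals is determined by its coefficients $q_i=\phi(e_i)\in\mathbb{Q}_{\geq 0}$, and $\sum_i q_i = \phi(\vec 1)=1$; so the target set is the rational probability simplex on $\{1,\ldots,k\}$.

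Soundness and surjectivity are short. By induction, $\sem t$ is a unital positive $\mathbb{Q}$-valued linear map, and each of the four axiom schemes of Definition~\ref{def:rational-convex} is a rational-convex identity; for example the exchange law asserts that both sides compute $\tfrac{1}{i+j+k+l}(iw+jx+ky+lz)$. For surjectivity, a rational probability vector $(n_1/N,\ldots,n_k/N)$ is realized by the nested term $x_1\rch{n_1}{N-n_1}(x_2\rch{n_2}{N-n_1-n_2}(\cdots x_k))$, the axiom $x\rch i 0 y=x$ handling the final trivial choice.

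The core of the argument is injectivity, for which I would introduce a $k$-ary multichoice combinator $\mulch{n_1,\ldots,n_k}{x_1,\ldots,x_k}$ for $(n_i)\in\NN^k$ with $\sum n_i>0$, built by iterating binary choices as above. Using the axioms I would prove: (i) well-definedness, i.e.\ independence of the parenthesization; (ii) symmetry under simultaneous permutation of $(n_i)$ and $(x_i)$; (iii) scale-invariance $\mulch{kn_1,\ldots,kn_k}{\vec x}=\mulch{n_1,\ldots,n_k}{\vec x}$ for $k>0$; and (iv) a combination rule
\[\mulch{m_1,\ldots,m_k}{\vec x}\rch ij\mulch{n_1,\ldots,n_k}{\vec x}\;=\;\mulch{iNm_1+jMn_1,\ldots,iNm_k+jMn_k}{\vec x}\]
with $M=\sum m_i$ and $N=\sum n_i$. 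Given these, (iv) together with induction on term structure shows every term is provably equal to a multichoice. Since $\sem{\mulch{\vec n}{\vec x}}(\vec x)=\sum_i (n_i/N)x_i$, two multichoices with equal interpretations have proportional weight vectors, and (iii) then identifies them up to the axioms.

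The main obstacle is establishing (i)--(iv). These are coherence-style claims that must be derived from the four comparatively weak rational-convexity axioms: the exchange law plays the role that associativity does for monoids, allowing any binary nesting to be reshuffled into any other with the same multiset of leaves. This is precisely the content of Stone's original argument~\cite{stone}; once it is available, the downstream reductions from arbitrary terms to canonical multichoices, and from equal interpretations to syntactic equality, are routine arithmetic.
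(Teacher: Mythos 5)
Your outline matches the paper's treatment of this proposition: the paper gives no self-contained proof but attributes the result to Stone, introduces the same multichoice normal forms, and (in its elided remarks) defers exactly your coherence lemmas (i)--(iv) --- permutation of columns, merging of repeated entries, cancellation of common factors, reduction of every term to a multichoice --- to Stone's original argument, after which reading off the weights from $\phi(e_i)$ gives the bijection just as you describe. The one presentational detail to watch is that the nested term realizing a weight vector must be truncated when the remaining weights sum to zero (the formation rule requires $i+j>0$), rather than relying on the axiom $x \rch {i}{0} y = x$ alone, but this is not a gap in the argument.
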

For instance, the map $\phi(x,y,z)=\frac 1 {10}(2x+3y+5z)$ is unital positive linear, 
and arises from the term $t\defeq x\rch 28 (y\rch 35 z)$.
This is the only term that gives rise to the $\phi$, modulo equations. 
In brief, one can recover $t$ from $\phi$ by looking at $\phi(1,0,0)=\frac 2{10}$, then $\phi(0,1,0)=\frac 3{10}$, then $\phi(0,0,1)=\frac 5{10}$. We will write 
$\mulchm{x_1 & \ldots  & x_k \\
        w_1 & \ldots & w_k}$
for the term corresponding to the linear map $(x_1\ldots x_k)\mapsto \frac {1}{\sum_{i=1}^kw_k}(w_1x_1+\cdots +w_kx_k)$.
These are normal forms for the theory of rational convex sets.

\hide{We define a \emph{multichoice term} inductively. For a term $x$ and a positive weight $w \in \mathbb N$, let
\[ \mulch{x}{w} = x \]
For terms $x_1,\ldots,x_n$ and weights $w_1,\ldots,w_n$, not all zero, define
\[
\mulchm{x_1 & \ldots  & x_n \\
        w_1 & \ldots & w_n}
= x_1 \rch{w_1}{w_2 + \ldots + w_n} 
\mulchm{x_2 & \ldots  & x_n \\
        w_2 & \ldots & w_n}
\]
if $w_2 + \ldots + w_n \neq 0$, or $x_1$ otherwise. Stone showed that the columns of the multichoice can be permuted, weights for repeated terms added and common factors cancelled. Every term on variables $x_1,\ldots,x_n$ can be written as a multichoice
\[ \mulchm{x_1 & \ldots  & x_n \\
        w_1 & \ldots & w_n}
\]
for a unique choice of weights $w_i \in \mathbb N$, not all zero, up to rescaling by a positive rational number. Terms $\pj {-}{x_1 : 0, \ldots, x_n : 0}{t}$ thus correspond precisely to the positive unital linear maps $\RR^n \to \RR$ sending rationals to rationals. }

\subsubsection{Characterization and completeness for $\nu$-free terms}
This section concerns the normalization of terms using free parameters but no $\nu$. Consider a single parameter $p$. If we think of a term $t$ as a syntactic tree, commutativity and discardability can be used to move all occurrences of $\pch p$ to the root of the tree, making it a \emph{tree diagram} of some depth $k$. Let us label the $2^k$ leaves with $t_{a_1\cdots a_k}$, $a_i \in \{0,1\}$. 
As a programming language expression, this corresponds to successive bindings
\begin{align*}
\mlstinline{let $\ a_{1}$=M.get($p$) in ... let $\ a_{k}$=M.get($p$) in}\ t_{a_1\cdots a_k}
\end{align*}
Permutations $\sigma \in S_k$ of the $k$ first levels in the tree act on tree diagrams by permuting the leaves via $t_{a_1\cdots a_k} \mapsto t_{a_{\sigma(1)}\cdots a_{\sigma(k)}}$.
By commutativity~(\ref{eqn:commpchpch}), those permuted diagrams are still equal to $t$, so we can replace $t$ by the average over all permuted diagrams, since rational choice is discardable. The average commutes down to the leaves~(\ref{eqn:commpchrch}), so we obtain a tree diagram with leaves $m_{a_1\cdots a_k} = \frac 1 {k!} \sum_{\sigma} t_{a_{\sigma(1)}\cdots a_{\sigma(k)}}$, where the average is to be read as a rational choice with all weights $1$. This new tree diagram is now by construction invariant under permutation of levels in the tree, in particular $m_{a_1\cdots a_k}$ only depends on the sum $a_1+\dots+a_k$. That is to say, the counts are a sufficient statistic. 

This leads to the following normalization procedure for terms $\pj {p_1 \ldots p_\ell}{x_1 \ldots x_n:0} t$:
Write $C^{p_j}_k(t_0,\ldots,t_k)$ for the permutation invariant tree diagram of $p_j$-choices and depth $k$ with leaves $t_{a_1\cdots a_k} = t_{a_1+\cdots+a_k}$. 
Then we can rewrite $t$ as $C^{p_1}_k(t_0,\ldots,t_k)$ where each $t_i$ is $p_1$-free. Recursively normalize each $t_i$ in the same way, collecting the next parameter. By discardability, we can pick the height of all these tree diagrams to be a single constant $k$, such that the resulting term is a nested structure of tree-diagrams $C^{p_j}_k$. We will use multi-indices $I=(i_1,\ldots,i_\ell)$ to write the whole stratified term as $C_k((t_I))$ where each leaf $t_I$ only contains rational choices. The interpretation of such a term can be given explicitly by Bernstein polynomials
\[ \textstyle \sem{C_k((t_I))}(\vec x)(\vec p) = \sum_I b_{I,k}(\vec p) \cdot \sem{t_I}(\vec x)(\vec p). \]
For example, normalizing $(v\pch p x){\pch p}(y\pch p v)$ gives $(v\pch p(x \rch 1 1 y)){\pch p} ((x\rch 11 y)\pch p v)=C_2(v,x{\rch 11}y,v)$. 

From this we obtain the following completeness result:
\begin{proposition}
\label{prop:bijection-term-function}
There is a bijective correspondence between equivalence classes of terms $\pj {p_1 \ldots p_\ell}{x_1 \ldots x_n:0} t$ and linear unital maps $\phi : \RR^n \to \RR^{\II^\ell}$ such that for every standard basis vector $e_j$ of $\RR^n$, $\phi(e_j)$ is a Bernstein polynomial with nonnegative rational coefficients. 
\end{proposition}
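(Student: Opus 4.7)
The plan is to read off the bijection directly from the normalization procedure described just above the statement, together with Stone's characterization of $\nu$-free rational-convex terms. Write $\Phi$ for the map sending an equivalence class $[t]$ to its functional interpretation $\sem t : \RR^n \to \RR^{\II^\ell}$. Soundness (Prop.~\ref{prop:model-soundness1}--\ref{prop:model-soundness2}) shows $\Phi$ is well-defined. To see that $\Phi[t]$ always lies in the claimed subset, normalize $t$ to $C_k((t_I))$ as described, where each leaf $t_I$ is a $\nu$-free term built only from rational choices over $x_1,\dots,x_n$. By Stone's proposition, each $\sem{t_I}$ is a unital positive rational linear map $\RR^n\to\RR$, i.e.\ $\sem{t_I}(\vec x) = \sum_j c^j_I x_j$ with $c^j_I\in\mathbb Q_{\geq 0}$ summing to $1$. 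The displayed formula $\sem{C_k((t_I))}(\vec x)(\vec p)=\sum_I b_{I,k}(\vec p)\sem{t_I}(\vec x)$ therefore gives
$\Phi[t](e_j)(\vec p) = \sum_I c^j_I\, b_{I,k}(\vec p),$
a Bernstein polynomial with nonnegative rational coefficients, as required.

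For injectivity, suppose $\sem t=\sem u$. Normalize both terms and, using the fact that every basis Bernstein polynomial of degree $k$ is a nonnegative rational combination of those of degree $k+1$ (which corresponds syntactically to inserting a discardable $\pch p$ choice at a leaf via axiom \textbf{D2} and commutativity), refine the two normal forms to a common depth $k$, yielding $C_k((t_I))$ and $C_k((u_I))$. Now the hypothesis becomes $\sum_I b_{I,k}(\vec p)\sem{t_I}(\vec x)=\sum_I b_{I,k}(\vec p)\sem{u_I}(\vec x)$ for all $\vec p$ and $\vec x$. Linear independence of the family $\{b_{I,k}\}$ (noted in the Bernstein background section) forces $\sem{t_I}=\sem{u_I}$ for every multi-index $I$, and then Stone's proposition promotes this pointwise equality to a provable equality $t_I = u_I$ in the theory of rational convexity. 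Hence $C_k((t_I)) = C_k((u_I))$ is derivable, so $[t]=[u]$.

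For surjectivity, let $\phi$ be a unital linear map with $\phi(e_j)$ a Bernstein polynomial with nonnegative rational coefficients. Choose a common degree $k$ for the finite list $\phi(e_1),\dots,\phi(e_n)$ (again using the refinement $b_{I,k}\in \mathrm{cone}_{\mathbb Q}\{b_{J,k+1}\}$) and write $\phi(e_j)=\sum_I c^j_I b_{I,k}$. Unitality $\sum_j \phi(e_j)=1=\sum_I b_{I,k}$ combined with linear independence yields $\sum_j c^j_I = 1$ for each $I$, so $(c^j_I)_j$ is a rational probability vector on $\{x_1,\dots,x_n\}$. By Stone's proposition there is a (unique modulo convexity) term $t_I$ with $\sem{t_I}(\vec x)=\sum_j c^j_I x_j$. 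Setting $t := C_k((t_I))$, the computation above shows $\sem t=\phi$.

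The main obstacle is the refinement step: one must verify that increasing the depth of a stratified normal form $C_k((t_I))$ to $C_{k+1}((t'_J))$ is actually derivable from the axioms, not merely semantically valid. This hinges on (i) discardability \textbf{D2} to insert a redundant $\pch p$ at every leaf, (ii) commutativity (\ref{eqn:commpchpch}) to resymmetrize the resulting depth-$(k+1)$ diagram, and (iii) commutativity (\ref{eqn:commpchrch}) to push rational choices back down to the leaves — the same three ingredients used in the earlier symmetrization argument. Once this is in hand, everything else reduces to Stone's theorem and the linear independence of the Bernstein basis.
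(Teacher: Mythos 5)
Your proposal is correct and follows essentially the same route as the paper: normalize to the stratified form $C_k((t_I))$, use linear independence of the Bernstein basis together with the partition-of-unity/unitality argument to extract and uniquely determine the leaf weights, and invoke Stone's correspondence for the $\nu$-free leaves. The degree-elevation step you single out as the main obstacle is exactly what the paper's normalization procedure already supplies (``By discardability, we can pick the height of all these tree diagrams to be a single constant $k$''), so your elaboration fills in detail rather than departing from the paper's argument.
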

\begin{proof}
We can assume all basis polynomials to have the same degree $k$.
If $\phi(e_j) = \sum_I w_{Ij}b_{I,k}$, then the unitality condition $\phi(1,\ldots,1) = 1$ means $\sum_I \left(\sum_j w_{Ij} \right) b_{I,k} = 1$,
and hence by linear independence and partition of unity, $\sum_j w_{Ij} = 1$ for every $I$. If we thus let $t_I$ be the rational convex combination of the $x_j$ with weights $w_{Ij}$, then $\sem{C_k((t_I))} = \phi$. Again by linear independence, the weights $w_{IJ}$ are uniquely defined by $\phi$.
\end{proof}
Geometric characterizations for the assumption of this theorem exist in \cite{poly-pos-interval, poly-pos-boxes}. For example, a univariate polynomial is a Bernstein polynomial with nonnegative coefficients if and only if it is positive on $(0,1)$. More care is required in the multivariate case.

\subsubsection{Normalization of Beta-Bernoulli}
\label{sec:norm}
For arbitrary terms ${\pj {p_1 \ldots p_\ell} {x_1\colon m_1, \ldots, x_s\colon m_s} t}$, we employ the following normalization procedure. Using conjugacy and the commutativity axioms~(\ref{eqn:commnunu}--\ref{eqn:commnurch}), we can push all uses of $\nu$ towards the leaves of the tree, until we end up with a tree of ratios and free parameter choices only. Next, by conjugacy and discardability, we expand every instance of $\nu_{i,j}$ until they satisfy $i+j = n$ for some fixed, sufficiently large $n$. We then stratify the free parameters into permutation invariant tree diagrams. That is, we find a number $k$ such that $t$ can be written as $C_k((t_I))$ where the leaves $t_I$ consist of $\nu$ and rational choices only. \\

In each $t_I$, commuting all the choices up to the root, we are left with a convex combination of chains of $\nu$'s of the form $\nu_{i_1,j_1} p_{\ell+1}.\ldots\nu_{i_d,j_d} p_{\ell + d}.x_j(p_{\tau(1)},\ldots,p_{\tau(m)})$ for some $\tau : m \to \ell + d$. By discardability, we can assume that there are no unused bound parameters. We consider two chains equal if they are $\alpha$-convertible into each other. Now if $c_1,\ldots,c_m$ is a list of the distinct chains that occur in any of the leaves, we can give the leaves $t_I$ the uniform shape $t_I = \mulchm{c_1 & \ldots & c_m \\ w_{I1} & \ldots & w_{Im}}$ for appropriate weights $w_{Ij} \in \mathbb N$. We will show that this representation is a unique normal form.

\subsubsection{Proof of completeness}\label{sec:completenessproof}
Consider a \emph{chain} $c = \nu_{i_1,j_1} p_{\ell+1}.\ldots\nu_{i_d,j_d} p_{\ell + d}.\,x(p_{\tau(1)},\ldots,p_{\tau(m)})$. Its measure-theoretic interpretation $\sem{c}(p_1,\ldots,p_\ell)$ is a pushforward of a product of $d$ beta distributions, supported on a hyperplane segment that is parameterized by the map $h_\tau : \II^d \to \II^m, h_\tau(p_{\ell+1},\ldots,p_{\ell+d}) = (p_{\tau(1)},\ldots,p_{\tau(m)})$.
Note that the position of the hyperplane may vary with the free parameters. To capture this geometric information, we call $\tau$ the \emph{subspace type} of the chain and $d$ its \emph{dimension}. Because of $\alpha$-invariance of chains, we identify subspace types that differ by a permutation of $\{\ell+1,\ldots,\ell+d\}$. 

\begin{wrapfigure}[11]{r}{5.5cm}\vspace{-3mm}\ \ 
\begin{minipage}{.5\textwidth}
	\begin{tikzpicture}[scale=4]
		\draw[color=black!60!green,fill opacity=0.15,fill=black!60!green] (0,0) rectangle (1,1);

		\draw[very thick,black!40!red] (0,0.8) node[left]{\color{black}$p_2$} -- (1,0.8) node[right] {\footnotesize$(3,2)$};
		\draw[very thick,black!40!red] (0,0.4) node[left] {\color{black}$p_1$}  -- (1,0.4) node[right] {\footnotesize $(3,1)$};
		\draw[very thick,black!20!yellow!20!red] (0.8,0) node[below]{\color{black}$p_2$} -- (0.8,1) node[above] {\footnotesize$(2,3)$};
		\draw[very thick,black!20!yellow!20!red] (0.4,0) node[below]{\color{black}$p_1$} -- (0.4,1) node[above] {\footnotesize$(1,3)$};
		\draw[very thick,black!40!blue] (0,0) node[below] {\footnotesize$(3,3)$} -- (1,1);
		
		\fill[black] (0.4,0.8) circle (0.75pt) node[above left] {\footnotesize $(1,2)$};
		\fill[black] (0.8,0.8) circle (0.75pt) node[above left] {\footnotesize $(2,2)$};
		\fill[black] (0.8,0.4) circle (0.75pt) node[below left] {\footnotesize $(2,1)$};
		\fill[black] (0.4,0.4) circle (0.75pt) node[above left] {\footnotesize $(1,1)$};
		
		\draw[black!60!green] (1.1,0.1) node {\footnotesize $(3,4)$};
	\end{tikzpicture}
\end{minipage}
\end{wrapfigure}
For example, each chain with two free parameters $p_1,p_2$ and a variable $x:2$ gives rise to a parameterized distribution on the unit square.
On the right, we illustrate the ten possible supports that such distributions can have, as subspaces of the square. 
In the graphic we write $(i,j)$ for $\nu p_3.\nu p_4.x(p_i,p_j)$, momentarily omitting the subscripts of $\nu$ because they do not affect the support. For instance, the upper horizontal line corresponds to $\nu{p_3}.x(p_3,p_2)$; the bottom-right dot corresponds to $x(p_2,p_1)$;
the diagonal corresponds to $\nu{p_3}.x(p_3,p_3)$; 
and the entire square corresponds to $\nu p_3.\nu p_4.x(p_3,p_4)$. All told there are four subspaces of dimension~$d=0$, five with~$d=1$, and one with~$d=2$. Notice that the subspaces are all distinct as long as $p_1\neq p_2$.

\begin{proposition}\label{prop:chainli}
If $c_1,\ldots,c_s$ are distinct chains with $i_1+j_1 = \dots = i_d + j_d = n$, then the family of functionals $\{ \sem{c_i}(-)(\vec p) : \RR^{\II^{m_1}} \times \cdots \times \RR^{\II^{m_s}} \to \RR \}_{i = 1,\ldots,s}$ is linearly independent whenever all parameters $p_i$ are distinct.
\end{proposition}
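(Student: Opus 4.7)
The plan is to partition the chains according to (i) their terminal variable, (ii) their subspace type, and (iii) their assignment of beta parameters, and to establish linear independence at each of these three levels.

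\emph{Reduction to a single terminal variable.} The functional $\sem{c_i}(\vec f)(\vec p)$ depends only on the component $f_k$ where $x_k$ is the variable appearing at the end of the chain $c_i$. Thus, by setting all other components of $\vec f$ to zero, any linear dependence among the $\sem{c_i}(-)(\vec p)$ splits into independent linear relations, one for each terminal variable. So we may assume all $c_i$ end with the same $x_k$ of arity $m$, and view each $\sem{c_i}(-)(\vec p)$ as a regular Borel measure $\mu_i$ on $\II^m$ via the Riesz representation.

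\emph{Singularity across different subspace types.} By construction, $\mu_i$ is the pushforward under $h_{\tau_i}$ of the product measure $\beta_{i_1,j_1}\otimes\cdots\otimes\beta_{i_{d_i},j_{d_i}}$ on $\II^{d_i}$, hence is supported on the affine subspace $S_{\tau_i}(\vec p)\subseteq\II^m$ of dimension $d_i$. Because the free parameters $p_1,\ldots,p_\ell$ are assumed distinct, distinct subspace types give distinct subspaces (one can read off $\tau_i$ from the geometry of $S_{\tau_i}(\vec p)$). Because each $\beta_{i_s,j_s}$ with $i_s,j_s>0$ is mutually absolutely continuous with Lebesgue measure on $\II$, and since the no-unused-bound-parameters convention ensures $h_{\tau_i}$ is injective on $\II^{d_i}$, the measure $\mu_i$ is mutually absolutely continuous with the $d_i$-dimensional Hausdorff measure on $S_{\tau_i}(\vec p)$. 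Any two distinct affine subspaces of $\II^m$ intersect in a subspace of dimension strictly less than the maximum of their dimensions, so $\mu_i$ and $\mu_j$ are mutually singular whenever $\tau_i\neq\tau_j$. Consequently, a vanishing linear combination $\sum_i\alpha_i\mu_i=0$ forces the partial sum for each subspace type to vanish independently.

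\emph{Independence within a single subspace type.} Fix a subspace type $\tau$ of dimension $d$. The chains of this type are indexed by $d$-tuples of beta parameters $((i_s,j_s))_{s=1}^d$ with $i_s+j_s=n$, modulo the stabilizer of $\tau$ under permutations of bound indices. The density of $\beta_{i_s,j_s}$ on $\II$ is a scalar multiple of the basis Bernstein polynomial $b_{i_s-1,n-2}$, so the density of the product measure is a product of basis Bernstein polynomials of common degree $n-2$ in each of the $d$ variables. The background on Bernstein polynomials recalled earlier says precisely that such multivariate products are linearly independent, so distinct beta assignments give linearly independent densities on $\II^d$. Since $h_\tau$ is injective, pushforward preserves linear independence, so the $\mu_i$ of type $\tau$ are linearly independent. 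Combining with the previous paragraph completes the proof.

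The step I expect to be the main obstacle is the singularity argument: in particular one must handle the case where a lower-dimensional $S_{\tau'}(\vec p)$ is contained in a higher-dimensional $S_\tau(\vec p)$, and verify that the absolute continuity of $\mu_i$ with respect to Hausdorff measure of the right dimension is genuinely uniform in $\vec p$. Both reduce to the observation that a nontrivial affine hyperplane in $\II^d$ has Lebesgue measure zero, so that $\beta$-integration assigns zero mass to any lower-dimensional stratum.
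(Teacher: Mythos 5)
Your proof is correct and follows essentially the same route as the paper: reduce to a single terminal variable, use the geometry of the supporting affine subspaces (distinct for distinct types when the $p_i$ are distinct) to split the linear relation by subspace type, and then invoke linear independence of products of Bernstein densities within a type. The only organizational difference is that the paper separates types by an induction over dimension together with an explicit restriction measure $\rho(A)=\sum_i a_i\mu_i(h_i^{-1}(h_j(A)))$, whereas you package the same null-set observations as pairwise mutual singularity of the type-grouped measures; both hinge on exactly the point you flag, namely that a lower-dimensional affine stratum is null for a measure absolutely continuous with respect to Hausdorff measure of the ambient support's dimension.
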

\begin{proof}
Fix $\vec p$. Chains on different variables are clearly independent, so we can restrict ourselves to a single variable $x:m$. We reason measure-theoretically. The interpretation of a chain $c_i$ of subspace type $\tau_i$ is a pushforward measure $h_{i*}(\mu_i)$ where $\mu_i$ is a product of $d$ beta distributions, and $h_i$ is the affine inclusion map $h_i(p_{\ell + 1},\ldots,p_{\ell + d}) = (p_{\tau_i(1)},\ldots,p_{\tau_i(m)})$. Let $\sum a_i h_{i*}(\mu_i) = 0$ as a signed measure. We show by induction over the dimension of the chains that all $a_i$ vanish. Assume that $a_i = 0$ whenever the dimension of $c_i$ is less than $d$, and consider an arbitrary subspace $\tau_j$ of dimension $d$. We can define a signed Borel measure on $\II^d$ by restriction
\[ 
\rho(A) \defeq \sum_i a_i h_{i*}(\mu_i)(h_j(A)) = \sum_i a_i \mu_i(h_i^{-1}(h_j(A)))
\]
as $h_j$ sends Borel sets to Borel sets (e.g.~\cite[\S15A]{kechris}). We claim that $\rho(A) = \sum_{c_i \text{ has type } \tau_j} a_i \mu_i(A)$, as the contributions of chains $c_i$ of different type vanish.
\begin{itemize}
\item If $c_i$ has dimension $< d$, $a_i = 0$ by the inductive hypothesis.
\item If $c_i$ has dimension $> d$, we note that $h_i^{-1}(h_j(A))$ only has at most dimension $d$. It is therefore a nullset for $\mu_i$.
\item If $c_i$ has dimension $d$ but a different type, and all $p_1,\ldots,p_\ell$ are assumed distinct, then the hyperplanes given by $h_i$ and $h_j$ are not identical. Therefore their intersection is at most $(d-1)$-dimensional and $h_i^{-1}(h_j(A))$ is a nullset for $\mu_i$.
\end{itemize}

By assumption, $\rho$ has to be the zero measure, but the $\mu_i$ are linearly independent. Therefore $a_i = 0$ for all $c_i$ with subspace type $\tau_j$. Repeat this for every subspace type of dimension $d$ to conclude overall linear independence.
\end{proof}

\begin{theorem}[Completeness]\label{thm:model-completeness}
If $\pj {\Gamma} {\Delta} {t,t'}$ and $\sem{t} = \sem{t'}$, then $\pj{\Gamma}{\Delta}{t = t'}$.
\end{theorem}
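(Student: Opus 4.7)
The plan is to apply the three-stage normalization from \S\ref{sec:norm} to both $t$ and $t'$ and then invoke the linear independence results already established to upgrade semantic equality to literal identity of normal forms.

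First, using conjugacy to push $\nu$-binders toward the leaves, commutativity and discardability to stratify the free-parameter choices into a permutation-invariant tree diagram of uniform depth $k$, and expanding via conjugacy and discardability so every remaining $\nu_{i,j}$ satisfies $i+j=n$ for a common $n$, I obtain derivable equalities
\[ t = C_k((t_I)) \quad\text{and}\quad t' = C_k((t'_I)), \]
with each leaf a rational convex combination of chains. Taking $k$, $n$, and the chain list $c_1,\ldots,c_s$ large enough to accommodate both sides (with zero weights for chains absent from a given leaf, permissible since not all weights in a leaf vanish), I force the uniform shape $t_I = \mulchm{c_1 & \cdots & c_s \\ w_{I1} & \cdots & w_{Is}}$ and analogously $t'_I = \mulchm{c_1 & \cdots & c_s \\ w'_{I1} & \cdots & w'_{Is}}$. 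Soundness (Prop.~\ref{prop:model-soundness1}) propagates the hypothesis to $\sem{C_k((t_I))} = \sem{C_k((t'_I))}$.

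Second, I would separate the two layers using linear independence. From the explicit formula
\[ \sem{C_k((t_I))}(\vec f)(\vec p) = \sum_I b_{I,k}(\vec p)\,\sem{t_I}(\vec f)(\vec p) \]
together with linear independence of the Bernstein basis $\{b_{I,k}\}$, I conclude $\sem{t_I} = \sem{t'_I}$ for every multi-index $I$. Then, evaluating at a point $\vec p$ whose coordinates are pairwise distinct and invoking Proposition~\ref{prop:chainli}, the functionals $\sem{c_j}(-)(\vec p)$ are linearly independent, so the normalized weights $w_{Ij}/\sum_{j'}w_{Ij'}$ and $w'_{Ij}/\sum_{j'}w'_{Ij'}$ coincide. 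Stone's uniqueness of multichoice representations (up to rescaling) then yields $t_I = t'_I$ derivably, and reassembling gives $t = t'$.

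The main obstacle is the bookkeeping in the first step: one must simultaneously choose $k$, $n$, and the chain list $c_1,\ldots,c_s$ large enough to serve both terms, and verify that the padding operations (zero-weight chains, inflating $\nu_{i,j}$ to $i+j=n$ by conjugacy) preserve derivable equality. A subtler point is the use of Proposition~\ref{prop:chainli}, whose hypothesis requires pairwise distinct free parameters; this is legitimate because $\sem{t_I} = \sem{t'_I}$ holds pointwise in $\vec p$ and the chain interpretations are continuous, so specializing to a generic $\vec p$ where all coordinates differ suffices to extract coefficient equality.
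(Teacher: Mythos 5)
Your overall strategy is the paper's: normalize both terms to a common stratified form $C_k((t_I))$ whose leaves are multichoices over a shared list of chains, then use two linear-independence facts to force the weights to agree. The normalization stage, including the padding with zero-weight chains and the inflation of all $\nu_{i,j}$ to a common $i+j=n$, is handled correctly and matches \S\ref{sec:norm}.

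The gap is in the order in which you apply the two independence arguments. You first invoke linear independence of the Bernstein basis on the identity $\sum_I b_{I,k}(\vec p)\,\sem{t_I}(\vec f)(\vec p)=\sum_I b_{I,k}(\vec p)\,\sem{t'_I}(\vec f)(\vec p)$ to conclude $\sem{t_I}=\sem{t'_I}$ for each $I$. But the coefficients $\sem{t_I}(\vec f)(\vec p)$ are themselves functions of $\vec p$ (chains may mention the free parameters; a leaf $x(p_1)$ contributes $f(p_1)$), and linear independence of $\{b_{I,k}\}$ only controls combinations with \emph{constant} coefficients. A vanishing combination with function coefficients need not have vanishing coefficients: already $b_{0,1}(p)\,(1-p)+b_{1,1}(p)\,(-p)=p(1-p)-(1-p)p=0$ with both coefficients nonzero on $(0,1)$. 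So the conclusion $\sem{t_I}=\sem{t'_I}$ is not justified at that point (it is true, but only as a consequence of the weight equality you are trying to prove), and your subsequent appeal to Proposition~\ref{prop:chainli} rests on it. The paper performs the steps in the opposite order: fix $\vec p$ with pairwise distinct coordinates, regard the difference $\sem t(\vec f)(\vec p)-\sem{t'}(\vec f)(\vec p)$ as a linear combination of the functionals $\sem{c_j}(-)(\vec p)$ with the \emph{scalar} coefficients $\sum_I\bigl(w_{Ij}/w_I-w'_{Ij}/w'_I\bigr)b_{I,k}(\vec p)$, apply Proposition~\ref{prop:chainli} to conclude these scalars vanish for every such $\vec p$, extend to all $\vec p$ by continuity, and only then apply Bernstein independence, now legitimately, since the coefficients $w_{Ij}/w_I-w'_{Ij}/w'_I$ are constants. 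Reversing your two steps in this way repairs the argument; as written, the first step fails.
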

\begin{proof}
	From the normalization procedure, we find numbers $k,n$, a list of distinct chains $c_1,\ldots,c_s$ with $i+j=n$ and weights $(w_{Ij}),(w'_{Ij})$ such that $\pj{\Gamma}{\Delta}{t = C_k((t_I))}$ and $\pj{\Gamma}{\Delta}{t' = C_k((t'_I))}$ where $t_I = \mulchm{c_1 & \ldots & c_s \\ w_{I1} & \ldots & w_{Is}}$ and $t'_I = \mulchm{c_1 & \ldots & c_s \\ w'_{I1} & \ldots & w'_{Is}}$.
The interpretations of these normal forms are given explicitly by 
\[ 
\sem{t}(\vec f)(\vec p) = \sum_{j} \frac{w_{Ij}}{w_I} \cdot b_{I,k}(\vec p) \cdot \sem{c_j}(\vec f)(\vec p) \text{ where } w_I = \sum_j w_{Ij}
\]
and analogously for $t'$. Then $\sem{t} = \sem{t'}$ implies that for all $\vec f$
\[ \sum_j \left(\sum_I \left(\frac{w_{Ij}}{w_I} -\frac{w'_{Ij}}{w'_I} \right) b_{I,k}(\vec p) \right) \sem{c_j}(\vec f)(\vec p) = 0. \]
By Proposition \ref{prop:chainli}, this implies  $\sum_I \left(\frac{w_{Ij}}{w_I} -\frac{w'_{Ij}}{w'_I} \right) b_{I,k}(\vec p) = 0$ for every $j$ and whenever the parameters $p_i$ are distinct.
By continuity of the left hand side, the expression in fact has to vanish for \emph{all} $\vec p$. By linear independence of the Bernstein polynomials, we obtain
$w_{Ij}/w_I = w'_{Ij}/w'_I$ for all $I,j$. Thus, all weights agree up to rescaling and we can conclude $\pj{\Gamma}{\Delta}{t = t'}$.
\end{proof}


\section{Extensionality and syntactical completeness}
\label{sec:extensionality-syntactic-completeness}
In this section we use the model completeness of the previous section to establish some syntactical results about 
the theory of Beta-Bernoulli. Although the model is helpful in informing the proofs, the statements of the results in
this section are purely syntactical. 

The ultimate result of this section is equational syntactical completeness (Cor.~\ref{cor:hpcomplete-bb}), which says that there can be no further equations in the theory without it becoming inconsistent with discrete probability. 
In other words, assuming that the axioms we have included are appropriate, they must be sufficient, regardless of 
any discussion about semantic models or intended meaning. This kind of result is sometimes called `Post completeness' after Post proved a similar result for propositional logic. 

The key steps towards this result are two extensionality results. These are related to the programming language idea 
of `contextual equivalence'.
Recall that in a programming language we often define a basic notion of 
equivalence on closed ground terms: these are programs with no free variables that return (say) booleans. 
This notion 
is often defined by some operational consideration using some notions of observation. From this we define contextual equivalence by saying that $t\approx u$ if, for all closed ground contexts $\mathcal C$, 
$\mathcal C[t]=\mathcal C[u]$. 

Contextual equivalence has a canonical appearance, but an axiomatic theory of equality, such as the one in this paper, is more compositional and easier to work with. Our notion of equality induces in particular a basic notion of 
equivalence on closed ground terms. Our extensionality results say that, assuming one is content with this basic notion of equivalence, the equations that we axiomatize coincide with contextual equivalence. 

\subsection{Extensionality}
\begin{proposition}[Extensionality for closed terms]
\label{prop:ext-closed} Suppose 
$\pj {\Gamma,q}\Delta {t}$ and $\pj {\Gamma,q}\Delta {u}$. 
If $\pj {\Gamma}\Delta {\nu_{i,j}q.t=\nu_{i,j}q.u}$ for all $i,j$,
then also $\pj {\Gamma}\Delta {t=u}$.
\end{proposition}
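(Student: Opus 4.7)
\medskip\noindent\textbf{Proof proposal.} The plan is to reduce the syntactic claim to a semantic one via model completeness (Thm.~\ref{thm:model-completeness}), and then exploit the fact that the family of Beta measures $\{\beta_{i,j}\}_{i,j>0}$ is, up to normalization, the family of Bernstein basis polynomials, which is rich enough to determine any continuous function on $\II$.

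First, by Thm.~\ref{thm:model-completeness} it suffices to prove $\sem{t}=\sem{u}$ in the functional interpretation. Fix $\vec f\in\RR^{\II^{m_1}}\times\cdots\times\RR^{\II^{m_s}}$ and $\vec p\in\II^\ell$ (where $\Gamma=p_1,\dots,p_\ell$). Define the continuous functions
\[ g_t(q) = \sem{t}(\vec f)(\vec p, q), \qquad g_u(q) = \sem{u}(\vec f)(\vec p, q), \]
on $\II$; continuity in $q$ is part of Prop.~\ref{prop:model-soundness2}. The hypothesis that $\sem{\nu_{i,j}q.t}=\sem{\nu_{i,j}q.u}$ for all $i,j>0$ (which follows from our derivability assumption by Prop.~\ref{prop:model-soundness1}) unfolds to
\[ \int_0^1 g_t(q)\,\beta_{i,j}(\dd q) = \int_0^1 g_u(q)\,\beta_{i,j}(\dd q) \qquad \text{for all } i,j>0. \]

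Now the density of $\beta_{i,j}$ is a positive rational multiple of the Bernstein basis polynomial $b_{i-1,i+j-2}$, and as recorded in the Bernstein background, $\{b_{0,k},\dots,b_{k,k}\}$ spans the polynomials of degree at most $k$. Letting $k$ range over $\NN$, the hypothesis therefore gives $\int_0^1 (g_t-g_u)(q)\,P(q)\,\dd q = 0$ for every polynomial $P$. Since $g_t-g_u$ is continuous, the Weierstrass approximation theorem lets us choose polynomials $P_n$ converging uniformly to $g_t-g_u$, whence $\int_0^1 (g_t-g_u)^2\,\dd q = 0$ and so $g_t=g_u$ pointwise. As $\vec f$ and $\vec p$ were arbitrary, $\sem{t}=\sem{u}$, and model completeness then gives $\pj{\Gamma,q}{\Delta}{t=u}$, which by weakening (or just the same derivation, since $q$ is still in scope) is what we want.

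The main thing to be careful about is justifying that the hypothesis about derivable equalities actually yields a semantic equality of the integrals: this is just Prop.~\ref{prop:model-soundness1}. The substantive mathematical content is the moment-determinacy argument, i.e.\ the observation that the Beta densities range over a spanning set for polynomials, hence their integrals determine any continuous function against Lebesgue measure on $[0,1]$ by Weierstrass. No new axioms beyond those of Section~\ref{sec:axioms-beta-bernoulli} are needed, because the completeness theorem packages them all.
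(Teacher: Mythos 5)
Your proof is correct, but it takes a genuinely different route from the paper's. The paper argues the contrapositive: assuming $\sem t\neq\sem u$, it locates a point $(\vec p,q)$ where the interpretations differ and then uses a concentration argument --- a sequence $\dbeta{i_n}{j_n}$ with $i_n/(i_n+j_n)\to q$ and vanishing variance converges (via Chebyshev) to the Dirac measure at $q$, so some finite $(i_n,j_n)$ already witnesses $\sem{\nu_{i_n,j_n}q.t}\neq\sem{\nu_{i_n,j_n}q.u}$. You instead argue directly: the densities of $\{\beta_{i,j}:i+j=k+2\}$ span the polynomials of degree $\le k$ (being proportional to the Bernstein basis; with the paper's indexing convention $b_{i,k}(p)=\binom{k}{i}p^{k-i}(1-p)^i$ the density of $\beta_{i,j}$ is proportional to $b_{j-1,i+j-2}$ rather than $b_{i-1,i+j-2}$, but this is immaterial), so the hypothesis forces $\int_0^1(g_t-g_u)P\,\dd q=0$ for every polynomial $P$, and Weierstrass then gives $\int(g_t-g_u)^2=0$, hence $g_t=g_u$ by continuity. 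Both arguments lean on Theorem~\ref{thm:model-completeness} in exactly the same way and on continuity of the functional interpretation in the parameters; the substantive difference is the analytic lemma (Dirac concentration versus moment determinacy via the Bernstein span). Your version meshes more tightly with the Bernstein-polynomial machinery the paper has already set up, while the paper's contrapositive formulation is slightly more constructive in spirit, exhibiting a concrete distinguishing pair $(i,j)$ along the concentrating sequence. Your closing remark about the conclusion living in context $\Gamma,q$ is also apt: the statement's $\pj{\Gamma}{\Delta}{t=u}$ should be read as $\pj{\Gamma,q}{\Delta}{t=u}$ since $q$ may occur free in $t$ and $u$, and that is exactly what completeness delivers.
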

\begin{proof}
We show the contrapositive. By the model completeness theorem (Thm.~\ref{thm:model-completeness}), we can reason in the model rather than syntactically. 
So we consider $t$ and $u$ such that $\sem t\neq \sem u$ as functions
$\RR^{\II^{m_1}}\times \RR^{\II^{m_k}}\to \RR^{\II^{l+1}}$, and show that there are $i,j$ such that 
$\sem {\nu_{i,j}q.t}\neq\sem{\nu_{i,j}q.u}$.
By assumption there are $\vec f$ and $\vec p,q$ such that $\sem t(\vec f)(\vec p,q)\neq \sem u(\vec f)(\vec p,q)$
as real numbers.

Now we use the following general reasoning: 
For any real $q\in \II$
we can pick monotone sequences $i_1<\dots<i_n<\dots$ and 
$j_1<\dots<j_n<\dots$ of natural numbers so that $\frac{i_n}{i_n+j_n}\to q$ as $n\to \infty$.
Moreover, for any continuous $h:\II\to \RR$, the integral $\int h \ \dd\dbeta {i_n}{j_n}$ converges to $h(q)$ as $n\to \infty$: one way to see this is to notice that the variance
of $\dbeta {i_n}{j_n}$ vanishes as $n\to \infty$, so by Chebyshev's inequality, $\lim_n \dbeta {i_n}{j_n}$ is a Dirac distribution at $q$. Thus, $\int\big(\sem t(\vec f)(\vec p,r)-\sem u(\vec f)(\vec p,r)\big)\ \dbeta{i_n}{j_n}(\dd r)$ is non-zero
as $n\to \infty$. By continuity, for some $n$, 
$\int \sem t(\vec f)(\vec p,r)\ \dbeta{i_n}{j_n}(\dd r)\neq 
\int
\sem u(\vec f)(\vec p,r)\ \dbeta{i_n}{j_n}(\dd r)$. 
	So, $\sem {\nu_{i_n,j_n}q.t}\neq\sem{\nu_{i_n,j_n}q.u}$.
\end{proof}

\begin{proposition}[Extensionality for ground terms]
\label{prop:ext-ground}
\emph{In brief: }If ${t[^{v_1 \dots v_k}\!/\!_{x_1\dots x_k}]
= 
u[^{v_1 \dots v_k}\!/\!_{x_1\dots x_k}]}$
for all suitable ground $v_1\dots v_k$, then $ t =u$. 

\emph{In detail: }Consider $t$ and $u$ with
$\pj {-}{x_1\colon m_1\dots x_k\colon m_k} {t,u}$. 
Suppose that whenever $v_1\dots v_k$ are terms with
$(\pj{p_1\dots p_{m_1}}{y,z:0}{v_1})$, \dots,\ 
$(\pj{p_1\dots p_{m_k}}{y,z:0}{v_k})$,
then we have 
$\pj{-}{y,z:0}{
t[^{v_1 \dots v_k}\!/\!_{x_1\dots x_k}]
= 
u[^{v_1 \dots v_k}\!/\!_{x_1\dots x_k}]}\text.
$
Then we also have
$
{\pj {-}{x_1\colon m_1\dots x_k\colon m_k} {t=u}\text.}
$
\end{proposition}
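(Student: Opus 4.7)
I will prove the contrapositive, reducing to model completeness so that I can reason semantically. Suppose $\pj{-}{\Delta}{t=u}$ is not derivable. By Theorem~\ref{thm:model-completeness}, $\sem t\ne\sem u$ as linear maps $\prod_{i=1}^k \RR^{\II^{m_i}}\to\RR$, so there exist continuous $f_i\in\RR^{\II^{m_i}}$ with $\sem t(\vec f)\ne\sem u(\vec f)$. The goal is to produce substitution terms $v_i$ of the required type $\pj{p_1\ldots p_{m_i}}{y,z:0}{v_i}$ and reals $y_0,z_0$ with $\sem{v_i}(y_0,z_0)$ close enough to $f_i$ that, by the compositionality of the semantics under substitution, $\sem{t[\vec v/\vec x]}(y_0,z_0)=\sem t(\sem{v_1}(y_0,z_0),\ldots,\sem{v_k}(y_0,z_0))$ still differs from the analogous expression for $u$; then by soundness (Prop.~\ref{prop:model-soundness1}) $\pj{-}{y,z:0}{t[\vec v/\vec x]=u[\vec v/\vec x]}$ is non-derivable, as required.

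The first step replaces each $f_i$ by a Bernstein polynomial $g_i$ with non-negative rational coefficients. By the Bernstein--Weierstrass theorem, any continuous function on $\II^{m_i}$ is uniformly approximable by Bernstein polynomials; rationalizing the coefficients (with arbitrarily small error) and raising all $g_i$ to a common degree $k$ via the standard basis-raising recursion, we obtain $g_i=\sum_I\gamma_{iI}b_{I,k}$ with $\gamma_{iI}\in\mathbb Q$ and $\|g_i-f_i\|_\infty$ arbitrarily small. The maps $\sem t$ and $\sem u$ are linear in $\vec f$ and bounded (by unitality and positivity of the semantics), hence continuous in the product sup norm, so for a sufficiently fine approximation we still have $\sem t(\vec g)\ne\sem u(\vec g)$.

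To realize the $g_i$ from terms, pick rationals $a>\max_{i,I}\gamma_{iI}$ and $b<\min_{i,I}\gamma_{iI}$, and set $\alpha_I^{(i)}=(\gamma_{iI}-b)/(a-b)\in[0,1]\cap\mathbb Q$. The linear map $\phi_i:\RR^2\to\RR^{\II^{m_i}}$ defined by $\phi_i(e_y)=\sum_I\alpha_I^{(i)}b_{I,k}$ and $\phi_i(e_z)=\sum_I(1-\alpha_I^{(i)})b_{I,k}$ sends each standard basis vector to a Bernstein polynomial with non-negative rational coefficients, and is unital since $\phi_i(e_y)+\phi_i(e_z)=\sum_I b_{I,k}=1$ by partition of unity. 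By Proposition~\ref{prop:bijection-term-function}, $\phi_i$ is the interpretation $\sem{v_i}$ of some term $v_i$ of the required type, and by construction $\sem{v_i}(a,b)=\sum_I(a\alpha_I^{(i)}+b(1-\alpha_I^{(i)}))b_{I,k}=g_i$. Therefore $\sem{t[\vec v/\vec x]}(a,b)=\sem t(\vec g)\ne\sem u(\vec g)=\sem{u[\vec v/\vec x]}(a,b)$, closing the contrapositive. The principal obstacle is this synchronization step: a single real pair $(a,b)$ must simultaneously realize the chosen rational Bernstein polynomials for every $i$. It is resolved by using a common degree $k$ and taking $(a,b)$ to be global bounds on the coefficients $\gamma_{iI}$, while the two continuation variables $y,z$ in each $v_i$ furnish exactly the affine freedom $\gamma\mapsto b+(a-b)\alpha$ needed to encode an arbitrary rational Bernstein coefficient.
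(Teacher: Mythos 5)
Your proof is correct, and its skeleton is the one the paper uses: argue the contrapositive, invoke model completeness (Thm.~\ref{thm:model-completeness}) to get $\sem{t}\neq\sem{u}$ semantically, replace the distinguishing continuous functions by definable ones via Bernstein/Weierstrass density together with linearity and boundedness of $\sem{t},\sem{u}$, and close with soundness. Where you diverge is in how the distinguishing substitution is realized. The paper exploits joint linearity twice: first to reduce to a single argument slot $i$ (all other $f_j$ set to $0$, realized by $v_j=z$), and then to reduce to a single Bernstein \emph{basis} polynomial $b_{I,k}$, which is trivially definable as a term $w$ with $\sem{w}(1,0)=b_{I,k}$; everything is then evaluated at the fixed point $(1,0)$. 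You instead approximate the whole tuple $\vec f$ at once by rational Bernstein polynomials $g_i$ with coefficients of arbitrary sign, and recover them as $\sem{v_i}(a,b)$ via the affine re-encoding $\gamma\mapsto(\gamma-b)/(a-b)$ through Prop.~\ref{prop:bijection-term-function}, with a globally chosen evaluation point $(a,b)$. Both are valid; the paper's reduction is shorter and needs only the definability of a single basis polynomial, while your version is more uniform (it exhibits substitutions realizing arbitrary rational Bernstein approximants of all the $f_i$ simultaneously) at the cost of the extra synchronization of $(a,b)$ and the coefficient-shifting step, all of which you handle correctly.
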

\begin{proof}
Again, we show the contrapositive. 
Let $\Delta = (x_1\colon m_1\dots x_k\colon m_k)$.
Suppose we have $t$ and $u$ such that $\neg(\pj{-}{\Delta}{t=u})$. Then by the model completeness theorem (Thm.~\ref{thm:model-completeness}),
we have $\sem t\neq \sem u$ as linear functions $\RR^{\II^{m_1}}\times\dots\times \RR^{\II^{m_k}}\to \RR$. 
Since the functions are linear, there is an index $i\leq k$ and a continuous function $f:{\II^{m_i}}\to \RR$ with 
$\sem t(0\dots 0,f,0\dots 0)
\neq 
\sem u(0\dots 0,f,0\dots 0)$. 
By the Stone-Weierstrass theorem, every such $f$ is a limit of polynomials,
and so since $\sem t$ and $\sem u$ are continuous and linear, 
there has to be a Bernstein basis polynomial $b_{I,k}: {\RR^{m_i}}\to \RR$ that already distinguishes them. This function is definable, i.e. there is a a term $\pj{p_1,\ldots,p_{m_i}}{y,z:0} w$ with $\sem{w}(1,0) = b_{I,k}$. Define terms $v_j = w$ for $i=j$ and $v_j=z$ for $i\neq j$. Then
\[
\sem {t[^{v_1\dots v_k}\!/\!_{x_1\dots x_k}]}(1,0) = 
\sem t(0, {\dots},b_{I,k},{\dots}, 0)
\neq 
\sem u(0,{\dots}, b_{I,k},{\dots}, 0)
=
\sem {u[^{v_1\dots v_k}\!/\!_{x_1\dots x_k}]}(1,0).
\]
The required
${\neg
\big(\pj{-}{y,z:0}{
t[^{v_1 \dots v_k}\!/\!_{x_1\dots x_k}]
= 
u[^{v_1 \dots v_k}\!/\!_{x_1\dots x_k}]}\big)}$
follows from the above disequality
because of the model soundness property (Props.~\ref{prop:model-soundness1}
and \ref{prop:model-soundness2}).
\end{proof}
From the programming perspective, a term $\pj-{y,z:0}{t_0}$ corresponds to a closed program of type \lstinline|bool|,
for it has two possible continuations, $y$ and $z$, depending on whether the outcome is \lstinline|true| or \lstinline|false|. 
From this perspective, Proposition~\ref{prop:ext-ground} says that for closed $t,u$, 
if $\mathcal C[t]=\mathcal C[u]$ for all boolean contexts $\mathcal C$, 
then $t=u$. 

\subsection{Relative syntactical completeness}
\label{sec:rel-syn-comp}
\begin{proposition}[Neumann, {\cite{neumann}}]
\label{prop:hpcomplete-conv}
If $t,u$ are terms in the theory of rational convexity (Def.~\ref{def:rational-convex}), then either $t=u$ is derivable or 
it implies $x\rch ij y=x\rch {i'}{j'} y$ for all nonzero $i,i',j,j'$. 
\end{proposition}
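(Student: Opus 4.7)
The plan is to exploit Stone's proposition (\S4.2.1): a term $\pj{-}{x_1{:}0,\ldots,x_k{:}0}{t}$ in the theory of rational convexity corresponds uniquely, modulo derivability, to a rational probability distribution $\vec p$ on $k$ elements. So if $t = u$ is not derivable, the associated distributions $\vec p \neq \vec q$ differ; since both sum to $1$, the set $A = \{i : p_i > q_i\}$ and its complement are both nonempty. Substituting $x_i \mapsto x$ for $i \in A$ and $x_i \mapsto y$ otherwise into the hypothesized equation $t = u$ yields a derivable consequence $x \rch{a}{b} y = x \rch{c}{d} y$ whose ratios $r := a/(a+b) = \sum_A p_i$ and $s := c/(c+d) = \sum_A q_i$ are distinct and both lie strictly in $(0,1)$.

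It thus suffices to show the two-variable statement: from an interior hypothesis $rx + (1-r)y = sx + (1-s)y$ with $r \neq s$ in $(0,1) \cap \mathbb{Q}$, one derives $x \rch{i}{j} y = x \rch{i'}{j'} y$ for all positive integers $i, j, i', j'$. I again appeal to Stone's bijection (on two generators, terms modulo derivability are exactly $[0,1] \cap \mathbb{Q}$), so that derivable identifications can be read off in this model. The further substitutions $x \leftarrow \alpha x + (1-\alpha) y$ and $y \leftarrow \gamma x + (1-\gamma) y$ (with $\alpha, \gamma \in [0,1] \cap \mathbb{Q}$) transform the hypothesis into $r' x + (1-r') y = s' x + (1-s') y$, where $r' = \gamma + r(\alpha - \gamma)$, $s' = \gamma + s(\alpha - \gamma)$, and crucially $r' - s' = (r-s)(\alpha - \gamma)$. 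Hence the derived ``step size'' $r' - s'$ can be any rational of magnitude at most $r-s$, and for small step sizes the base $r'$ ranges freely across most of $(0,1)$.

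Finally I chain via transitivity. Given arbitrary targets $r^* = i/(i+j)$ and $s^* = i'/(i'+j')$ in $(0,1) \cap \mathbb{Q}$, choose a chain $r^* = r_0, r_1, \ldots, r_n = s^*$ of interior rationals with $|r_l - r_{l-1}|$ small enough relative to $r-s$ and to the distances of the $r_l$ from $\{0,1\}$. For each consecutive pair, solving for $(\alpha, \gamma) \in [0,1]^2 \cap \mathbb{Q}^2$ that realize $(r', s') = (r_{l-1}, r_l)$ provides valid substitution parameters, so each link $r_{l-1} = r_l$ is derivable; transitivity chains these into $r^* = s^*$, which is exactly $x \rch{i}{j} y = x \rch{i'}{j'} y$. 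The main obstacle is the parameter bookkeeping for the chain: one must verify that the required $(\alpha, \gamma)$ really stay in $[0,1]^2$ at each link when step sizes are chosen sufficiently small. This is routine but somewhat fiddly; once handled, the derivation goes through.
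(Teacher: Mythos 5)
The paper offers no proof of this proposition---it is imported from Neumann with only a citation---so there is nothing internal to compare against; I will judge your argument on its own. The overall route is sound and is essentially the standard one: collapse to two generators using Stone's normal forms, observe that a non-derivable equation yields a nontrivial consequence $x\rch ab y = x\rch cd y$ with $r=\frac{a}{a+b}>s=\frac{c}{c+d}$, and then use the affine reparameterizations $x\leftarrow \alpha x+(1-\alpha)y$, $y\leftarrow \gamma x+(1-\gamma)y$ together with transitivity to propagate this identity to every pair of ratios in $(0,1)$. Your key relation $r'-s'=(r-s)(\alpha-\gamma)$ is correct, and the bookkeeping you defer really is routine: solving gives $\delta=(r'-s')/(r-s)$, $\gamma=r'-r\delta$ and $\alpha=r'+(1-r)\delta$, so a link $(r',s')=(r_{l-1},r_l)$ is realizable with $\alpha,\gamma\in[0,1]$ whenever $|\delta|\le\min(r_{l-1},1-r_{l-1})$; keeping the whole chain inside the compact interval between $r^*$ and $s^*$ gives a uniform positive lower bound on these quantities, so finitely many rational steps of sufficiently small size suffice.

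One intermediate claim is false as stated: $r=\sum_{i\in A}p_i$ and $s=\sum_{i\in A}q_i$ need not lie strictly in $(0,1)$. For $t=x_1$, $u=x_2$ one gets $r=1$ and $s=0$; more generally $r=1$ whenever $\vec p$ is supported on $A$, and $s=0$ whenever $\vec q$ vanishes on $A$. This does not break the argument---the displayed formulas for $\alpha$ and $\gamma$ never use interiority of $r$ and $s$, only $r>s$ together with interiority of the targets and of the intermediate points $r_l$, and the extreme case $r=1$, $s=0$ is just the equation $x=y$, from which everything collapses by congruence---but you should either drop the interiority hypothesis from your two-variable lemma or dispatch the boundary cases explicitly before invoking it. With that repair the proof is complete.
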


\begin{corollary}
\label{cor:hpcomplete-bb}
The theory of Beta-Bernoulli is syntactically complete relative to the theory of rational convexity, in the following sense. 
For all terms $t$ and $u$, either $t=u$ is derivable, or it implies 
$x\rch ij y=x\rch {i'}{j'} y$ for all nonzero $i,i',j,j'$. 
\end{corollary}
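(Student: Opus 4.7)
The plan is to reduce an arbitrary pair $t,u$ to a pair of closed terms on only two zero-ary variables $y,z$, apply the normalization of Section \ref{sec:norm} to rewrite them as pure rational convex combinations of $y$ and $z$, and then invoke Neumann's theorem (Prop.~\ref{prop:hpcomplete-conv}). Assume throughout that $\pj{\Gamma}{\Delta}{t=u}$ is not derivable; the goal is to show that assuming this equation implies $x\rch ij y=x\rch{i'}{j'}y$ for all nonzero $i,i',j,j'$.

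First I would close the parameter context. Write $\Gamma=(p_1,\ldots,p_\ell)$. Applying the contrapositive of Proposition~\ref{prop:ext-closed} once per parameter yields natural numbers $(i_s,j_s)$ with $i_s,j_s>0$ such that the closed terms
$t^\ast \defeq \nu_{i_1,j_1}p_1.\cdots\nu_{i_\ell,j_\ell}p_\ell.\,t$ and $u^\ast$ (defined analogously) satisfy $\pj{-}{\Delta}{t^\ast=u^\ast}$ is still not derivable. Note that if $t=u$ is assumed, then $t^\ast=u^\ast$ follows by congruence. Next, write $\Delta=(x_1\colon m_1,\ldots,x_k\colon m_k)$ and apply the contrapositive of Proposition~\ref{prop:ext-ground} to the closed pair $t^\ast,u^\ast$: there are terms $v_j$ with $\pj{p_1\ldots p_{m_j}}{y,z:0}{v_j}$ such that $\pj{-}{y,z:0}{t^\sharp=u^\sharp}$ is not derivable, where $t^\sharp\defeq t^\ast[\vec v/\vec x]$ and $u^\sharp\defeq u^\ast[\vec v/\vec x]$. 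Again, $t=u$ implies $t^\sharp=u^\sharp$ by substitution and congruence.

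Now $t^\sharp$ and $u^\sharp$ are closed terms whose only variables are the zero-ary $y,z$, so the normalization procedure of \S\ref{sec:norm} collapses dramatically: with no free parameters the stratification $C_k$ is trivial, and each ``chain'' has the shape $\nu_{i_1,j_1}q_1.\cdots\nu_{i_d,j_d}q_d.x$ for $x\in\{y,z\}$, which by discardability (D1) is derivably equal to $x$ itself. Hence there are weights $a,b,a',b'\in\NN$ with $a+b,a'+b'>0$ such that $\pj{-}{y,z:0}{t^\sharp=y\rch a b z}$ and $\pj{-}{y,z:0}{u^\sharp=y\rch{a'}{b'}z}$ are derivable (allowing the degenerate cases $b=0$ or $a=0$, which fit the axioms $x\rch i 0 y=x$ after reordering). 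Since $t^\sharp=u^\sharp$ is not derivable, the two rational convex combinations are not equal in the theory of rational convexity either. By Neumann's proposition (Prop.~\ref{prop:hpcomplete-conv}) applied to the rational-convexity terms $y\rch a b z$ and $y\rch{a'}{b'}z$, the equation $y\rch a b z=y\rch{a'}{b'}z$ implies $x\rch ij y=x\rch{i'}{j'}y$ for all nonzero $i,i',j,j'$. Chaining the implications $t=u\Rightarrow t^\ast=u^\ast\Rightarrow t^\sharp=u^\sharp\Rightarrow y\rch a b z=y\rch{a'}{b'}z$ with Neumann's conclusion gives the result.

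The only delicate step is the normalization collapse in the third paragraph: one must verify that the construction of \S\ref{sec:norm}, which was designed and used only inside the completeness proof, is in fact derivable purely from the axioms (which it is, since each of its moves is a direct application of conjugacy, commutativity, discardability, or the rational-convexity axioms) and that in the parameter-free, zero-ary-variable case it terminates in a term of rational convexity. Everything else is a straightforward contrapositive chase through Propositions~\ref{prop:ext-closed}, \ref{prop:ext-ground}, and \ref{prop:hpcomplete-conv}.
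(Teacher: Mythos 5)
Your proof is correct and follows essentially the same route as the paper, whose own proof is just the one-line instruction to combine Propositions~\ref{prop:ext-closed}, \ref{prop:ext-ground} and \ref{prop:hpcomplete-conv}. You have filled in the details faithfully --- in particular the step the paper leaves implicit, namely that a closed ground term over $y,z:0$ normalizes (via conjugacy and (D1)) to a rational-convexity term so that Neumann's result applies --- and the contrapositive chase through the two extensionality propositions is exactly as intended.
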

This is proved by combining Propositions \ref{prop:ext-closed}, \ref{prop:ext-ground} and \ref{prop:hpcomplete-conv}. As an example for extensionality and completeness, consider the equation $\nu_{1,1}p.x(p,p) = \nu_{1,1}p.(\nu_{1,1}q.x(p,q))$. It is not derivable, as can be witnessed by the substitution $x(p,q) = (y\pch q z) \pch p z$. Normalizing yields $y \rch 1 2 z = y \rch 1 3 z$ which is incompatible with discrete probability (see \S\ref{app:derivation}). 
In programming syntax, the candidate equation is written
\\\\\begin{minipage}{.47\textwidth}
\begin{lstlisting}
LHS = let p = M.new(1,1) in (p,p)
\end{lstlisting}
\end{minipage}\,
\begin{minipage}{.53\textwidth}
\begin{lstlisting}
RHS = (M.new(1,1) , M.new(1,1))
\end{lstlisting}
\end{minipage}
and the distinguishing context is {\lstinline|C[-] = let (p,q)=(-) in if M.get(p) then M.get(q) else false|}. 
That is to say, the closed ground programs \lstinline|C[LHS]| and \lstinline|C[RHS]| necessarily have different observable statistics: this follows from the axioms. 

\subsection{Remark about stateful implementations}
In the introduction we recalled the idea of using P\'olya's urn to implement a Beta-Bernoulli process
using local (hidden) state. 

Our equational presentation gives a recipe for understanding the correctness of the stateful 
implementation. First, one would give an operational semantics,
and then a basic notion of observational equivalence on closed ground terms in terms of the finite probabilities associated with reaching certain ground values. From this, an operational notion of contextual equivalence 
can be defined~(e.g.~\cite[\S 6]{BizjakBirkedal}, \cite{sv, wgcc}).
Then, one would show that the axioms of our theory hold up-to contextual equivalence. 
Finally one can deduce from the syntactical completeness result that the equations satisfied by this stateful implementation must be exactly the equations satisfied by the semantic model. 

In fact, in this argument, 
it is not necessary to check that axioms~(C1) and~(D2) hold in the operationally defined contextual equivalence, because 
the axiomatized equality on closed ground terms is independent of these axioms. To see this, notice that our normalization procedure (\S\ref{sec:norm}) 
doesn't use~(C1) or~(D2) when the terms are closed and ground, since then we can take $n=k=0$. 
This is helpful because the remaining axioms are fairly straightforward, e.g.~\eqref{eqn:conjugacy} is the essence of the urn scheme and (D1) is garbage collection. 


\section{Conclusion}
\label{sec:conc}
Exchangeable random processes are central to many Bayesian models.
The general message of this paper is that the analysis of exchangeable random processes, based on basic concepts from programming language theory, depends on three crucial ingredients: commutativity, discardability, and abstract types. 
We have illustrated this message by showing that just adding the conjugacy law to these ingredients leads to a complete equational theory for the Beta-Bernoulli process (Thm.~\ref{thm:model-completeness}). Moreover, we have shown that this equational theory has a canonical syntactic and axiomatic status, regardless of the measure theoretic foundation (Cor.~\ref{cor:hpcomplete-bb}). Our results in this paper open up the following avenues of research. 
\begin{description}
\item[Study of nonparametric Bayesian models:] 
We contend that abstract types, commutativity and discardability are fundamental tools for studying nonparametric Bayesian models, especially hierarchical ones. 
For example, the Chinese Restaurant Franchise~\cite{Teh-HDP} can be implemented as a module with three abstract types, 
\lstinline|f| (franchise), \lstinline|r| (restaurant), \lstinline|t| (table), and functions
\lstinline|newFranchise:()->f|, \lstinline|newRestaurant:f->r|, \lstinline|getTable:r->t|, \lstinline|sameDish:t*t->bool|.
Its various exchangeability properties correspond to commutativity/discardability in the presence of type abstraction. (For other examples, see \cite{XRPDA-PPS2017}.)
\item[First steps in synthetic probability theory:]
As is well known, the theory of rational 
convex sets corresponds to the monad~$D$ of rational discrete probability distributions. 
Commutativity of the theory amounts to commutativity of the monad~$D$~\cite{linton-comm,kock-comm}. 

As any parameterized algebraic theory, the theory of Beta-Bernoulli (\S\ref{sec:presentation-beta-bernoulli}) 
can be understood as a monad~$P$ on the functor category $[\mathbf{FinSet},\mathbf{Set}]$,
with the property that to give a natural transformation~$\mathbf{FinSet}(\ell,-)\to P(\coprod_{j=1}^k\mathbf{FinSet}(m_k,-))$ is to give a term $(\pj{p_1\dots p_\ell}{x_1\colon m_1\dots x_k\colon m_k}t)$, and monadic bind is substitution
(\cite[Cor.~1]{s-pred-logic}, \cite[\S VIIA]{s-instances}). This can be thought of as an intuitionistic set theory with an interesting notion of probability. 
As such this is a `commutative effectus'~\cite{jacobs-commutative-effectus}, a synthetic probability theory (see also~\cite{kock:commutative-monads-as-a-theory-of-distributions}).
 Like~$D$, the
global elements $1\to P(2)$ are the rationals in $[0,1]$ (by Prop.~\ref{prop:bijection-term-function}) but unlike~$D$, the global elements
$1\to P(P(2))$ include the beta distribution.

\item[Practical ideas for nonparametric Bayesian models in probabilistic programming:]
A more practical motivation for our work is to inform the design of module systems for probabilistic programming languages. 
For example, Anglican, Church, Hansei and Venture already support nonparametric Bayesian primitives~%
\cite{ks-prob-first-class-store,wu-church,msp-venture}. 
We contend that abstract types are a crucial concept from the perspective of exchangeability.
\end{description}
\medskip\textbf{Acknowledgements.} It has been helpful to discuss this work with many people, including Ohad Kammar, Gordon Plotkin, Alex Simpson, and Marcin Szymczak. This work is partly supported by EPSRC grants EP/N509711/1 and EP/N007387/1, a Royal Society University Research Fellowship, and an Institute for Information \& communications Technology Promotion (IITP) grant funded by the Korea government (MSIT) (No.2015-0-00565, Development of Vulnerability Discovery Technologies for IoT Software Security).

\bibliography{refs}

\newpage\appendix
\section{Example derivations}
\label{app:derivation}

In this appendix, we derive equations mentioned in the main text of the paper. The first equation, as found in \S\ref{sec:axioms-beta-bernoulli}, is
\[ x \rch 1 1 y = ((x \rch 1 1 y) \pch p x) \pch p (y \pch p (x \rch 1 1 y)) \]
which is called von Neumann's trick. Here is the derivation of this equation:
\begin{align*}
	((x \rch 1 1 y) \pch p x) \pch p (y \pch p (x \rch 1 1 y)) 
	& = ((x \rch 1 1 y) \pch p (x \rch 1 1 x)) \pch p ((y \rch 1 1 y) \pch p (x \rch 1 1 y))
	\\ 
	& = ((x \pch p x) \rch 1 1 (y \pch p x)) \pch p ((y \pch p x) \rch 1 1 (y \pch p y))
	\\ 
	& = ((y \pch p x) \rch 1 1 (x \pch p x)) \pch p ((y \pch p x) \rch 1 1 (y \pch p y))
	\\
	& = ((y \pch p x) \pch p (y \pch p x)) \rch 1 1 ((x \pch p x) \pch p (y \pch p y)) 
	\\
	& = (y \pch p x) \rch 1 1 (x \pch p y) 
	\\
	& = (y \rch 1 1 x) \pch p (x \rch 1 1 y) 
	\\
	& = (x \rch 1 1 y) \pch p (x \rch 1 1 y) 
	\\
	& = (x \rch 1 1 y).
\end{align*}
Using normal forms, we can in fact easily see that $x \rch 1 1 y$ is the only normalized term $(\pj{-}{x,y} t)$ that satisfies $t = (t \pch p x) \pch p (y \pch p t)$. Making $(t \pch p x) \pch p (y \pch p t)$ permutation-invariant means computing the average
\begin{align*}
(t \pch p x) \pch p (y \pch p t)&=
((t \pch p x) \pch p (y \pch p t)) \rch 1 1 ((t \pch p y) \pch p (x \pch p t)) 
\\&=  ((t\rch 11 t) \pch p (x \rch 1 1 y)) \pch p ((x \rch 1 1 y) \pch p (t\rch 11 t)) 
\\&=  (t \pch p (x \rch 1 1 y)) \pch p ((x \rch 1 1 y) \pch p t) 
\ \ = C_2(t,x \rch 1 1 y, t) 
\end{align*}
whereas by discardability, the permutation-invariant form of $t$ is simply $C_2(t,t,t)$.
As these are normal forms, we can read off that $t=x \rch 1 1 y$ is the only fixed point.\\

The following derivations show the normalization procedure being applied to the programs from \S\ref{sec:rel-syn-comp}:
\begin{figure}[!htb]
\begin{minipage}{.5\textwidth}
\begin{align*}
\quad & \nu_{1,1}p. ((y \pch p z) \pch p z) \\
& = (\nu_{2,1}p. (y \pch p z)) \rch 1 1 (\nu_{1,2}p. z) \\
& = (y \rch 2 1 z) \rch 1 1 z \\
& = (y \rch 2 1 z) \rch 3 3 (y \rch 0 3 z) \\ 
& = (y \rch 2 0 y) \rch 2 4 (z \rch 1 3 z) \\
& = y \rch 2 4 z \\
& = y \rch 1 2 z. 
\end{align*}
\end{minipage}\,
\begin{minipage}{.5\textwidth}
\begin{align*}
\quad &\nu_{1,1}p. \nu_{1,1}q. ((y \pch q z) \pch p z) \\
& = \nu_{1,1}p. ((\nu_{1,1}q. (y \pch q z)) \pch p (\nu_{1,1}q.z)) \\
& = \nu_{1,1}p. ((y \rch 1 1 z) \pch p z) \\
& = (y \rch 1 1 z) \rch 1 1 z \\
& = (y \rch 1 1 z) \rch 2 2 (y \rch 0 2 z) \\
& = (y \rch 1 0 y) \rch 1 3 (z \rch 1 2 z) \\
& = y \rch 1 3 z.
\end{align*}
\end{minipage}
\end{figure}

\pagebreak

\hide{
\section{Geometric picture of the distributions}
\label{app:picture-distribution}
The family of measures that arise as interpretations of Beta-Bernoulli terms consists of mixtures of pushforwards of products of beta distributions on certain hyperplanes. Is is useful to give a geometric intuition for these, as their geometry is essential in the proof of Prop.~\ref{prop:chainli}: \\

Consider free parameters $p_1,p_2$ and a variable $x:2$, giving rise to distributions on the unit square. The left picture gives an example of the support of some chains:

\begin{figure}[!htb]
\begin{minipage}{.5\textwidth}
	\begin{tikzpicture}[scale=5.5]
		\draw (0,0) rectangle (1,1);
		
		\draw[very thick,black!40!red] (0,0.8) -- (1,0.8);
		\draw[black!40!red] (0.3,0.8) node[above] {$\nu_{1,1}p_3.x(p_3,p_2)$};
		\draw[very thick,black!40!blue] (0,0) -- (1,1);
		\draw[black!40!blue] (0.35,0.3) node[right] {$\nu_{1,1}p_3.x(p_3,p_3)$};
		
		\fill[black!60!green] (0.4,0.8) circle (0.75pt);
		\draw[black!60!green] (0.35,0.76) node[below] {$x(p_1,p_2)$};
	\end{tikzpicture}
\end{minipage}
\begin{minipage}{.5\textwidth}
	\begin{tikzpicture}[scale=5.5]
		\draw (0,0) rectangle (1,1);
		
		\draw[very thick] (0,0.8) node[left]{$p_2$} -- (1,0.8) node[right] {$(3,2)$};
		\draw[very thick] (0,0.4) node[left] {$p_1$}  -- (1,0.4) node[right] {$(3,1)$};
		\draw[very thick] (0.8,0) node[below]{$p_2$} -- (0.8,1) node[above] {$(2,3)$};
		\draw[very thick] (0.4,0) node[below]{$p_1$} -- (0.4,1) node[above] {$(1,3)$};
		\draw[very thick] (0,0) node[below] {$(3,3)$} -- (1,1);
		
		\fill[black] (0.4,0.8) circle (0.75pt) node[above left] {$(1,2)$};
		\fill[black] (0.8,0.8) circle (0.75pt) node[above left] {$(2,2)$};
		\fill[black] (0.8,0.4) circle (0.75pt) node[above left] {$(2,1)$};
		\fill[black] (0.4,0.4) circle (0.75pt) node[above left] {$(1,1)$};
		
		\draw (0.6,0.2) node {$(3,4)$};
	\end{tikzpicture}
\end{minipage}
\end{figure}

The right picture shows all of the ten possible supports that such chains can have, four of dimension $0$, five of dimension $1$ and one of dimension $2$, and annotates their type. Note how the hyperplanes are all distinct as long as $p_1 \neq p_2$.
}
\hide{
\section{Allowing zero hyperparameters}
\label{app:betazero}
In term formation, we have excluded zero hyperparameters $\nu_{i,0}$ and $\nu_{0,i}$. This is because $\beta_{i,0}$ and $\beta_{0,i}$ are not absolutely continuous distributions on $\II$, but delta peaks on $1$ and $0$ respectively. This makes the geometric reasoning about the dimension of chains more complicated. For example, the equation 
\[ \nu_{1,0}p.x(p,p) = \nu_{1,0}p.\nu_{1,0}q.x(p,q) \]
holds in the model, as both sides evaluate to the delta-peak at $(1,1)$, yet our axioms cannot derive it, as it treats both sides as degenerate chains of distinct types! We suggest introducing new symbols $\mathbb{O}$ and $\mathbb{I}$ alongside axioms
\[ \vdash \nu_{1,0}p.x(p) = x(\mathbb{I}) \quad \vdash \nu_{0,1}p.x(p) = x(\mathbb{O}) \]
to solve the problem. This restores the notion of dimension of chains and allows us to derive 
\[ \vdash \nu_{1,0}p.x(p,p) = x(\mathbb{I},\mathbb{I}) = \nu_{1,0}p.\nu_{1,0}q.x(p,q) \]
as desired.}

\end{document}